\title{On the asymptotic analysis of the high-order statistics of the channel capacity over generalized fading channels}
\author[YILMAZ]{
\textbf{Ferkan YILMAZ\thanks{ferkan@yildiz.edu.tr}}\\
Computer Engineering Department, Faculty of Electrical \& Electronics, Y{\i}ld{\i}z Technical University, Istanbul, Turkey\\ 
ORCID iD: https://orcid.org/0000-0001-6502-8280\\ [1.8em]
\rec{.201}
\acc{.201}
\finv{..201}
}
\def\E{\ifmmode{\mathbb E}\else{$\mathbb E$}\fi} 
\def\N{\ifmmode{\mathbb N}\else{$\mathbb N$}\fi} 
\def\R{\ifmmode{\mathbb R}\else{$\mathbb R$}\fi} 
\def\Q{\ifmmode{\mathbb Q}\else{$\mathbb Q$}\fi} 
\def\C{\ifmmode{\mathbb C}\else{$\mathbb C$}\fi} 
\def\H{\ifmmode{\mathbb H}\else{$\mathbb H$}\fi} 
\def\Z{\ifmmode{\mathbb Z}\else{$\mathbb Z$}\fi} 
\def\P{\ifmmode{\mathbb P}\else{$\mathbb P$}\fi} 
\def\T{\ifmmode{\mathbb T}\else{$\mathbb T$}\fi} 
\def\SS{\ifmmode{\mathbb S}\else{$\mathbb S$}\fi} 
\def\DD{\ifmmode{\mathbb D}\else{$\mathbb D$}\fi} 
\newcommand{\bse}{\begin{subequations}}
\newcommand{\ese}{\end{subequations}}
\newcommand{\ben}{\begin{enumerate}}
\newcommand{\een}{\end{enumerate}}
\newcommand{\bens}{\begin{enumerate*}}
\newcommand{\eens}{\end{enumerate*}}
\newcommand{\be}{\begin{equation}}
\newcommand{\ee}{\end{equation}}
\newcommand{\bea}{\begin{eqnarray}}
\newcommand{\eea}{\end{eqnarray}}
\newcommand{\baa}{\begin{eqnarray*}}
\newcommand{\eaa}{\end{eqnarray*}}
\newcommand{\bc}{\begin{center}}
\newcommand{\ec}{\end{center}}
\newtheorem{theorem}{Theorem}
\theoremstyle{corollary}
\theoremstyle{lemma}
\newtheorem{lemma}{Lemma}
\theoremstyle{proposition}
\theoremstyle{axiom}
\theoremstyle{conjecture}
\theoremstyle{example}
\theoremstyle{definition}
\theoremstyle{remark}
\newcommand{\theoremref}[1]{Theorem~\protect\ref{#1}}
\newcommand{\lemmaref}[1]{Lemma~\protect\ref{#1}}
\newcommand{\CHECKQEDsymbol}{$\hfill\checkmark$}
\newcommand{\figref}[1]{Figure~\protect\ref{#1}}
\newcommand{\secref}[1]{Section~\protect\ref{#1}}
\newcommand\trigeq{\mathrel{\overset{\makebox[0pt]{\mbox{\normalfont\tiny\sffamily$\!\scriptscriptstyle\triangle\!$}}}{=}}}
\def\suscript(#1,#2,#3){{#1}^{#2}_{#3}}
\newcommand{\msub}[1]{{\suscript(m,{},{#1})}}
\newcommand{\betasub}[1]{{\suscript(\beta,{},{#1})}}
\newcommand{\gammasub}[1]{{\suscript(\gamma,{},{#1})}}
\newcommand{\gammabarsub}[1]{{\suscript(\bar{\gamma},{},{#1})}}
\newcommand{\xisub}[1]{{\suscript(\xi,{},{#1})}}
\newcommand{\gammabar}{{\bar{\gamma}}}
\newcommand\scalemath[3]{\scalebox{#2}[#1]{\mbox{\ensuremath{\displaystyle{#3}}}}}
\newcommand{\fracparams}[2]{\genfrac{}{}{0pt}{}{{#1}}{{#2}}}
\newcommand{\FoxYDefinition}[3]{\suscript(\rm{Y},{#1},{#2}){\left[#3\right]}}
\let\prevequation\equation\def\equation{\setlength\abovedisplayskip{3.25pt}\setlength\belowdisplayskip{3.25pt}\prevequation}
\newcommand{\FoxY}[6][right]{
	\ifthenelse{\equal{#1}{right}}{\suscript(\rm{Y},{#2},{#3}){\left[{#4}\left|\fracparams{#5}{#6}\right.\right]}}{
		\ifthenelse{\equal{#1}{left}}{\suscript(\rm{Y},{#2},{#3}){\left[\left.{#4}\right|\fracparams{#5}{#6}\right]}}{
			\suscript(\rm{Y},{#2},{#3}){\left[{#4}\left|\fracparams{#5}{#6}\right.\right]}
		}
	}
}
\let\preveqnarray\eqnarray\def\eqnarray{\setlength\abovedisplayskip{3.25pt}\setlength\belowdisplayskip{3.25pt}\setlength\arraycolsep{1.4pt}\preveqnarray}
\newcommand{\FoxH}[6][right]{
	\ifthenelse{\equal{#1}{right}}{\suscript(\rm{H},{#2},{#3}){\left[{#4}\left|\fracparams{#5}{#6}\right.\right]}}{
		\ifthenelse{\equal{#1}{left}}{\suscript(\rm{H},{#2},{#3}){\left[\left.{#4}\right|\fracparams{#5}{#6}\right]}}{
			\suscript(\rm{H},{#2},{#3}){\left[{#4}\left|\fracparams{#5}{#6}\right.\right]}
		}
	}
}
\newcommand{\MeijerGDefinition}[3]{\suscript(\rm{G},{#1},{#2}){\left[#3\right]}}
\let\prevmultline\multline\def\multline{\setlength\abovedisplayskip{3.25pt}\setlength\belowdisplayskip{3.25pt}\prevmultline}
\newcommand{\MeijerG}[6][right]{
	\ifthenelse{\equal{#1}{right}}{\suscript(\rm{G},{#2},{#3}){\left[{#4}\left|\fracparams{#5}{#6}\right.\right]}}{
		\ifthenelse{\equal{#1}{left}}{\suscript(\rm{G},{#2},{#3}){\left[\left.{#4}\right|\fracparams{#5}{#6}\right]}}{
			\suscript(\rm{G},{#2},{#3}){\left[{#4}\left|\fracparams{#5}{#6}\right.\right]}
		}
	}
}
\newcommand{\Hypergeom}[6]{
	\suscript({},{},{#1})\suscript({F},{#3},{#2})\!\left[{#4};{#5};{#6} \right]
}
\newcommand{\DHypergeom}[9]{
	{\suscript({},{},{#1})\suscript({\mathbb{F}},{#3},{#2})\!\left[\fracparams{#4}{#7}\!\fracparams{;}{;}\!\fracparams{#5}{#8}\!\fracparams{;}{;}\!\fracparams{#6}{#9}\right]}
	}
\newcommand{\DHypergeomSMALL}[9]{
	{\suscript({},{},{#1})\suscript({\mathbb{F}},{#3},{#2})\!\bigl[\textstyle\fracparams{#4}{#7}\!\fracparams{;}{;}\!\fracparams{#5}{#8}\!\fracparams{;}{;}\!\fracparams{#6}{#9}\bigr]}
	}
\newcommand{\BesselI}[2][0]{
	\suscript({I},{},{#1})\left({#2}\right)
}
\newcommand{\HeavisideTheta}[1]{
	{\theta}\left({#1}\right)
}
\newcommand{\DiracDelta}[1]{
	{\delta\left({#1}\right)}
}
\newcommand{\KroneckerDelta}[2]{
	{\delta_{{#1},{#2}}}
}
\newcommand{\Binomial}[2]{
	\binom{#1}{#2} 
}
\newcommand{\Multinomial}[2]{
	\binom{#1}{#2} 
}
\newcommand{\ExtGamma}[4]{
	{\rm{\Gamma}}\left({#1},{#2},{#3},{#4}\right)
}
\newcommand{\PolyGamma}[2]{
	\ifthenelse{\equal{#1}{0}}{{\psi}\left({#2}\right)}
		{{\psi}^{#1}\left({#2}\right)}
}
\newcommand{\Expected}[1]{
	{\,{\mathbb{E}}\!\left[{#1}\right]}
}
\newcommand{\argmin}{\arg\!\min}
\newcommand{\imaginary}{{\rm{i}}}
\newcommand{\EulerGamma}{{\boldsymbol{\rm{E}}}}
\DeclareMathOperator\erf{erf}
\DeclareMathOperator\inverf{inverf}
\newcommand{\FourierTransform}[4][norm]{
	\ifthenelse{\equal{#1}{norm}}{
	{\mathcal{F}_{#2}\left\{{#3}\right\}\left({#4}\right)}
	}{
		\ifthenelse{\equal{#1}{conj}}{
		{\mathcal{F}_{#2}^{*}\left\{{#3}\right\}\left({#4}\right)}
		}{
		{\mathcal{F}_{#2}\left\{{#3}\right\}\left({#4}\right)}
		}
	}
}
\newcommand{\phantomas}[3][c]{\ifmmode\makebox[\widthof{$#2$}][#1]{$#3$}\else\makebox[\widthof{#2}][#1]{#3}\fi}
\newlength{\widebarargwidth}
\newlength{\widebarwidth}
\newlength{\widebarargheight}
\newlength{\widebarargdepth}
\DeclareRobustCommand{\widebar}[1]{%
	\settowidth{\widebarargwidth}{\ensuremath{#1}}%
	\settoheight{\widebarargheight}{\ensuremath{#1}}%
	\settodepth{\widebarargdepth}{\ensuremath{#1}}%
	\addtolength{\widebarargwidth}{-0.2\widebarargheight}%
	\addtolength{\widebarargwidth}{-0.2\widebarargdepth}%
	\makebox[0pt][l]{\addtolength{\widebarargheight}{0.3ex}%
		\hspace{0.2\widebarargheight}%
		\hspace{0.2\widebarargdepth}%
		\hspace{0.5\widebarargwidth}%
		\setlength{\widebarwidth}{0.61803\widebarargwidth}%
		\addtolength{\widebarwidth}{0.3ex}%
		\makebox[0pt][c]{\rule[\widebarargheight]{\widebarwidth}{0.1ex}}}{#1}}
\newcommand{\mathsym}[1]{{}}
\newcommand{\fact}[1]{{#1}!}
\newcommand{\inverseF}{{\rm{inv}}}
\newcommand{\Mathematica}{{\scshape{Mathematica}}\textsuperscript{\tiny\textregistered}\,}
\newcommand{\Matlab}{{\scshape{Matlab}}\textsuperscript{\tiny{TM}}\,}
\newcommand{\Maple}{{\scshape{Maple}}\textsuperscript{\tiny\textregistered}\,}
\begin{document}

\maketitle

\begin{abstract}
In this article, we provide further asymptotic analysis to the \emph{higher-order statistics} (HOS) of the channel capacity over generalized fading channels, especially by proposing simple and closed-form expressions each of
which can be easily computed as a tight-bound revealing the existence of constant gap between the actual and asymptotic HOS of of the channel capacity in the limit of both high- and low-\emph{signal to noise ratios} (SNRs).
As such, we show that these closed-form asymptotic expressions are insightful enough to comprehend the diversity gains. The mathematical formalism we followed in this article is illustrated with some selected numerical examples that 
validate the correctness of our newly derived asymptotic results.
\keywords{Higher-order ergodic capacity, higher-order amount of fading, and generalized fading channels.}
\end{abstract}

\section{Introduction}\label{Sec:SectionI}
In literature of wireless communications, the \emph{first-order statistics} (FOS) of the \emph{channel capacity} (CC) is well-known as \emph{averaged} CC (ACC) and defined by $\widebar{C}(\gammabar_{end})=\mathbb{E}[\log(1+\gamma_{end})]$ for a certain averaged SNR $\gammabar_{end}=\mathbb{E}[\gamma_{end}]$, where $\gamma_{end}$ denotes the end-to-end instantaneous \emph{signal-to-noise ratio} (SNR), $\mathbb{E}[\cdot]$ denotes the expectation operator, and $\log(\cdot)$ denotes the natural logarithm. We note that the FOS of the channel capacity has been extensively investigated in literature, considering different fading environments \cite[and references therein]{BibYilmazAlouiniUnifiedPerformanceTCOM2012,BibYilmazAlouiniTCOM2012,BibYilmazAlouiniPIMRC2010,BibYilmazAlouiniICC2012,BibKhairiAshourHamdi2008,BibYilmazAlouiniISWCS2010,BibYilmazAlouiniEGK2010,BibDiRenzoGraziosiSantucci2010}. In particular, Yilmaz \& Alouini proposed in \cite{BibYilmazAlouiniICC2012} a \emph{moment generating function} (MGF)-based approach for the ACC analysis, specifically introducing how to unify the ACC analyses of diversity combining and transmission schemes into a single MGF-based analysis. Recently, in order to ensure the reliability of wireless transmissions in addition to its quality, many theoreticians, practitioners and researchers\cite{BibDiRenzoGraziosiSantucci2010,BibSagiasWPC2011,BibYilmazAlouiniWCLetter2012,BibYilmazTabassumAlouiniMELECON2012,BibYilmazTabassumAlouiniTVT2014,BibPeppasMathiopoulosZhangSasase2018,BibTsiftsisFoukalasKaragiannidisKhattabTVT2016,BibZhangChenPeppasLiLiuTCOM2017} turn their attention to the \emph{higher-order channel capacity} (HOCC), defined as the \emph{higher-order statistics} (HOS) of the CC, that is $\widebar{C}(n;\gammabarsub{end})=\mathbb{E}[\log^n(1+\gamma_{end})]$, where $n\in\mathbb{N}$ denotes the order of the statistics. The HOCC is easily utilized  to statistically characterize the maximum transmission throughput over fading channels with a negligible small \emph{bit error rate} (BER). However, since being analytically intractable, the HOCC over fading channels has scarce literature, especially when compared to the FOS. In particular, to achieve the exact HOCC analysis, Di Renzo \emph{et. al} proposed in \cite[Theorem 6]{BibDiRenzoGraziosiSantucci2010} a numerical approach for \emph{maximum ratio combining} (MRC) over generalized fading environments. However, we note that this numerical approach cannot be easily performed in closed-form even in typical scenarios of wireless transmission. The next framework \cite{BibSagiasWPC2011}, presented by Sagias \emph{et al.}, is a \emph{probability density function} (PDF)-based framework valid only for diversity combining receivers in Rayleigh and Nakagami-\emph{m} fading environments. Later, presented in \cite{BibYilmazAlouiniWCLetter2012} is the first MGF-based approach for the exact HOCC anaysis in fading environments such that it eliminates all difficulties emerged from \cite{BibDiRenzoGraziosiSantucci2010} and \cite{BibSagiasWPC2011}. The HOCC analysis is also studied in \cite{BibYilmazTabassumAlouiniMELECON2012,BibYilmazTabassumAlouiniTVT2014} for \emph{multihop transmission} (MT), in \cite{BibPeppasMathiopoulosZhangSasase2018} for \emph{equal gain combining} (EGC), in \cite{BibTsiftsisFoukalasKaragiannidisKhattabTVT2016} for dispersed spectrum cognitive radio over $\eta  -\mu$ fading channels, and in \cite{BibZhangChenPeppasLiLiuTCOM2017} for spectrum aggregation systems. However, all HOCC analyses, mentioned above, generally lead to complicated expressions \emph{either involving a single integration or the evaluation of advanced special functions} ---  they are therefore not insightful enough to comprehend the diversity gains.
In this context, Yilmaz and Alouini introduced in \cite{BibYilmazSPAWC2012} a simple and comprehensive asymptotic framework for the analysis of the HOCC over generalized composite fading environments.

In agreement with the literature above, let $\bar{C}(n;\gammabar_{end})$ denote the the HOCC in generalized fading environment, and $\bar{C}_{\text{AWGN}}(n;\gammabar_{end})$ denote the HOCC of \emph{additive white Gaussian noise} (AWGN) channel in non-fading environment. We note that there exist gaps between $\bar{C}(n;\gammabar_{end})$ and $\bar{C}_{\text{AWGN}}(n;\gammabar_{end})$ both in high-SNR or low-SNR regimes since these two environment have distinct diversity gains. Accordingly, we show that these gaps are constant in log-domain, as seen in \figref{Fig:ConstantCapacityGapBetweenDifferentFadingEnvironments}, that is  
\begin{equation}\label{Eq:ConstantGaps}
\lim_{\gammabar_{end}\rightarrow\infty}\log\left(\frac{\bar{C}_{\text{AWGN}}(n;\gammabar_{end})}{\bar{C}(n;\gammabar_{end})}\right)\approx\Delta_{\text{High-SNR}},
\text{ and }
\lim_{\gammabar_{end}\rightarrow{0}}\log\left(\frac{\bar{C}(n;\gammabar_{end})}{\bar{C}_{\text{AWGN}}(n;\gammabar_{end})}\right)\approx\Delta_{\text{Low-SNR}},
\end{equation}
where $\Delta_{\text{High-SNR}}\!\geq\!0^{+}$ and $\Delta_{\text{Low-SNR}}\!\geq\!0^{+}$ are the positive constant gaps in high- and low-SNR regimes, respectively. Further, we propose novel expressions on the asymptotic HOCC analysis \emph{in terms of computable closed-form expressions}, each of which is insightful enough to comprehend the diversity gains. In addition to our asymptotic and closed-form expressions, we obtain the boundary SNR values specifying for which SNR values the wireless transmission in a specified fading environment certainly enters either in high-SNR or low-SNR regimes. In this context, we exemplified our
\begin{wrapfigure}{r}{0.5\textwidth}
  \begin{center}
  	\vspace{-9mm}
	\psfrag{XLabel}[c]{\footnotesize SNR [dB]}
	\psfrag{YLabel}[c]{\footnotesize Higher-Order Ergodic Capacity}
	\includegraphics[width=0.5\textwidth,keepaspectratio=true]{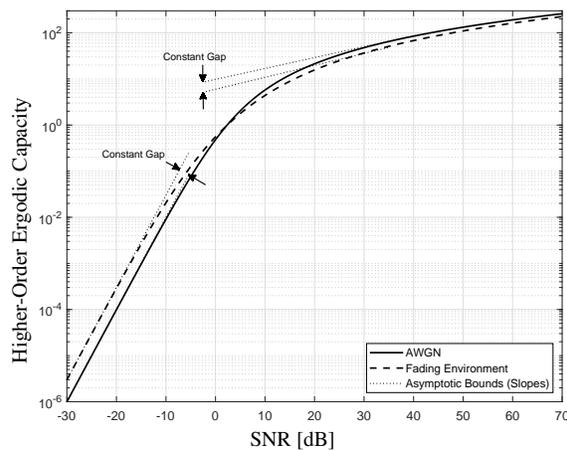}
	\vspace{-6mm}
	\caption{Example illustration of the existence of constant gap between the HOCCs of different fading environments.}
	\label{Fig:ConstantCapacityGapBetweenDifferentFadingEnvironments}
	\vspace{-5mm}
  \end{center}
  \vspace{-14mm}
\end{wrapfigure} 
asymptotic and closed-form expressions for different generalized fading environments commonly used in the literature. All these expressions have been checked numerically for their validation, correctness and accuracy.

We organized the remainder of this article as follows. In \secref{Sec:SectionII}, we shortly explain the notations used through the article, and establish the channel and system model, and discuss in details the HOS of the CC and the
HOCC. In \secref{Sec:SectionIII} and \secref{Sec:SectionIV}, we provide simple asymptotic and closed-form expressions for the HOCC analysis both in high-SNR and low-SNR regimes, respectively, with the introduction of the auxiliary coefficient needed for the asymptotic analysis. Eventually, some key results are presented for some well-known fading environments. Moreover, in \secref{Sec:SectionIII} and \secref{Sec:SectionIV}, we define boundary SNR values to determine for which SNR values high- or low-SNR regime starts. With the aid of the boundary SNR values, we extends our asymptotic to analysis to the analysis of the higher-order ergodic capacity in low-SNR regimes. These newly obtained results are presented and applied to some well-known fading environments. Finally, we summarize the main results and draw some conclusions in the last section.

\section{HOCC in generalized fading environments}\label{Sec:SectionII}
Let us consider a transmission of band-limited signal over AWGN channel in fading environments. We can write the mathematical model of the signal received by the receiver as follows 
\begin{equation}\label{Eq:SingleLinkModel}
	R_{end}=\alpha{S}+N,
\end{equation}
where $R_{end}$ denotes the received signal, $S$ denotes the transmitted symbol with the average power $E_{S}\trigeq\mathbb{E}[|S|^2]$, and $\alpha$ denotes the fading amplitude. Further, $N$ denotes the complex AWGN with zero mean and $N_{0}/2$ variance per dimension. Without loss of generality, we assume that the \emph{channel state information} (CSI) is known at the receiver. Coherent detection at the receiver is therefore possible, which results that the instantaneous SNR $\gamma_{end}$ during one symbol duration is given by $\gamma_{end}=\frac{E_s}{N_0}\alpha^2$, where $\alpha$ is evidently considered as a random variable over all transmission. The statistical moments of $\gamma_{end}$, i.e., $\Expected{\gamma^{n}_{end}}$ for all $n\in\mathbb{N}$ are therefore important for statistical characterization. 

We introduce in \cite{BibYilmazSPAWC2012} the $n$th order \emph{amount of
fading} (AOF), that is
\begin{equation}\label{Eq:HigherOrderAmountOfFading}
	AF_{n}(\gammabar_{end})=\frac{\Expected{\gamma^{n}_{end}}}{\Expected{\gamma_{end}}^{n}}-1.
\end{equation}
where $n\in\mathbb{N}$ denotes the order of the statistics. The first order AOF, often called AOF in the literature, is first introduced in \cite{BibCharash1979} as a measure of fading severity. In more details, the inverse of the first order AOF $AF_{1}(\gammabar_{end})$ is a good approximation for the number of mutually independent replicas of the transmitted signals, that is $m_\text{order}\approx{1}/{AF_{1}(\gammabar_{end})}$. Since the higher-order AOF $AF_{n}(\gammabar_{end})$ is characterized by the fading parameters, rather than by the average power $\gammabarsub{end}$, it is successfully utilized in \cite{BibYilmazAlouiniIWCMC2009} as a tool to statistically characterize $\gamma_{end}$ for \emph{moment generating function} (MGF), PDF, and \emph{cumulative distribution function} (CDF), especially by exploiting $AF_{n}(\gammabar_{end})$ in form of Laguerre moments (i.e., see \cite[Eqs. (24), (25), and (26)]{BibYilmazAlouiniIWCMC2009}). 

Further, following in the same manner, we introduce the other performance metric, known as the instantaneous HOCC $C\left(n;\gamma_{end}\right)$, whose definition is given by \begin{equation}\label{Eq:InstantaneousHigherOrderChannelCapacity}
	C\left(n;\gamma_{end}\right)=\log^{n}\left(1+\gamma_{end}\right),\quad\text{\emph{nats\textsuperscript{$n$}/\,sn\textsuperscript{$n$}/\,Hz\textsuperscript{$n$}}},
\end{equation}
where $n\in\mathbb{N}$ denotes the order of the statistics. Due to reflections, refractions, scattering and multipath propagation in fading environments, it is evidently considered as a random variable whose average $\widebar{C}\left(n;\gammabar_{end}\right)=\mathbb{E}[C\left(n;\gamma_{end}\right)]$ is known either as the HOS of the CC or the HOCC in the literature, and readily given by 
\begin{equation}\label{Eq:HOCC}
	\widebar{C}(n;\gammabarsub{end})=\int_{0}^{\infty}\log^{n}\left(1+\gamma\right)\,p_{\gamma_{end}}\left(\gamma;\gammabarsub{end}\right)d\gamma,
		\quad\text{\emph{nats\textsuperscript{$n$}/\,sn\textsuperscript{$n$}/\,Hz\textsuperscript{$n$}}},
\end{equation}
where $p_{\gamma_{end}}\left(\gamma;\gammabarsub{end}\right)$ denotes the PDF of $\gamma_{end}$. Regarding the efficiency in  numerical computation, we can say that it is tedious and complicated to obtain \eqref{Eq:HOCC} in simple and
easy-to-compute closed-form expressions, essentially because of the order $n\in\mathbb{N}$. In this context, there exist two regimes for the instantaneous SNR $\gamma_{end}$, the one of which is the high-SNR regime concerned with the diversity-multiplexing trade-off \cite{BibZhengTse2003}. The other one is the low-SNR regime involved with the energy efficiency \cite{BibVerduTIT2002,BibZhengMedardTIT2007,BibLiJinMcKayMatthewGaoWongTCOM2010,BibLozanoTulinoVerduTIT2003}. The complexity of the numerical analysis for the HOCC becomes more of an issue for high-SNR and low-SNR regimes. In the following sections, we introduce a general framework yielding asymptotically tight-bounds to achieve the analysis of the HOCC $\widebar{C}(n;\gammabarsub{end})$, and also attaining that of the ACC as a consequence.

\section{HOCC in high-SNR regime of generalized fading environments}\label{Sec:SectionIII}
In this section, we deal with a sharp characterization of the HOCC in the form of an asymptotically tight lower-bound in high-SNR $\gammabarsub{end}\gg{1}$ regime. 

\begin{theorem}\label{Theorem:FadingEnvironmentInHighSNRRegime}
Any fading environment leads in high-SNR regime when    
\begin{equation}\label{Eq:LimitSNRValueForHighSNRRegime}
	\gammabar_{end}\gtrapprox{17.5848747065}~~({12.4513927829}~\text{\rm{dB}}).
\end{equation}
\end{theorem}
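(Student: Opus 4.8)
The plan is to work straight from the definition \eqref{Eq:HOCC} and to reduce the statement to a distribution-free property of the capacity kernel $\log^n(1+\gamma)$ introduced in \eqref{Eq:InstantaneousHigherOrderChannelCapacity}. First I would write the kernel in its high-SNR form $\log(1+\gamma)=\log\gamma+\log(1+1/\gamma)$, so that $\log^n(1+\gamma)=\bigl(\log\gamma+\log(1+1/\gamma)\bigr)^n$, and expand by the binomial theorem. The leading contribution $\log^n\gamma$ is the genuine high-SNR term, while every remaining term carries at least one factor $\log(1+1/\gamma)=1/\gamma-1/(2\gamma^2)+\cdots$ that decays as $\gamma\to\infty$. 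Taking expectations term by term reproduces the asymptotically tight lower bound announced at the start of the section, $\bar{C}(n;\gammabar_{end})\geq\Expected{\log^n\gamma_{end}}$, with the discarded mass controlled by $\Expected{\log(1+1/\gamma_{end})\,\log^{n-1}(1+\gamma_{end})}$.

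The second step is to argue that the onset of the high-SNR regime is dictated by this kernel evaluated at the mean SNR, and is therefore independent of the fading law. Consistent with the constant-gap picture of \eqref{Eq:ConstantGaps}, in which the non-fading capacity $\bar{C}_{\text{AWGN}}(n;\gammabar_{end})=\log^n(1+\gammabar_{end})$ is the reference, I would define the regime boundary through the accuracy of replacing the reference kernel $\log^n(1+\gammabar_{end})$ by its leading logarithmic surrogate $\log^n\gammabar_{end}$. Because this is a statement about the deterministic function $\log(1+\gamma)$ at $\gamma=\gammabar_{end}$ rather than about the PDF $p_{\gamma_{end}}$, the resulting criterion is the same for every fading environment, which is exactly the universality asserted by ``any fading environment.'' Writing the surrogate error as $\log(1+1/\gammabar_{end})$ measured against $\log(1+\gammabar_{end})$ and imposing the prescribed tightness gives a single transcendental equation in $\gammabar_{end}$; I would then solve it numerically to obtain $\gammabar_{end}=17.5848747065$ and convert it via $10\log_{10}(17.5848747065)=12.4513927829$ dB to recover \eqref{Eq:LimitSNRValueForHighSNRRegime}.

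The main obstacle is not the numerics, since the root-finding is routine, but the universality built into ``any fading environment'' together with the precise choice of criterion. The delicate part is justifying that the regime boundary can legitimately be read off the AWGN reference kernel, so that the same scalar threshold serves every order $n\in\mathbb{N}$ and every admissible $p_{\gamma_{end}}$; I would need to verify that the binomial remainder above is uniformly subdominant for $\gammabar_{end}$ beyond the threshold rather than only on average. A secondary subtlety is that the reported ten-digit constant is not a round tolerance but the exact root of one specific transcendental relation, so I would have to pin down the exact functional whose vanishing the authors adopt as the definition of the boundary and confirm it is the surrogate error $\log(1+1/\gammabar_{end})/\log(1+\gammabar_{end})$ being thresholded and not a closely related variant.
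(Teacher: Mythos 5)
There is a genuine gap: your proposed route is deterministic where the paper's is probabilistic, and it cannot produce the stated constant. In the paper, the threshold $\gamma_{th}=8$ is indeed chosen (as you anticipate) so that $\log^{n}(1+\gamma)\approx\log^{n}(\gamma)$ for $\gamma\geq\gamma_{th}$, but that step only fixes a threshold on the \emph{instantaneous} SNR, worth about $9$~dB. The onset of the high-SNR regime is then defined through the distribution: the paper requires $\gamma_{end}$ to exceed $\gamma_{th}$ with probability at least $1/2$, i.e.\ $P_{\gamma_{end}}(\gamma_{th};\gammabar_{end})=1/2$, and solves $\gammabar_{end}=\inverseF_{\gammabar}\{P_{\gamma_{end}}(\gamma_{th};\gammabar)\}(1/2)$. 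Your plan replaces this median condition by the accuracy of the surrogate $\log^{n}\gammabar_{end}$ for the deterministic AWGN kernel $\log^{n}(1+\gammabar_{end})$ at the mean SNR; that is a statement about a point value of $\log(1+\gamma)$, not about the tail mass of $p_{\gamma_{end}}$, and there is no choice of tolerance on $\log(1+1/\gammabar_{end})/\log(1+\gammabar_{end})$ that naturally yields $17.5848747065$ (at that SNR the relative error of the surrogate is roughly $1.9\%$, not a distinguished value).

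The second, related gap is the justification of ``any fading environment.'' You argue universality because your criterion does not involve the PDF at all; the paper instead obtains universality by a worst-case argument: among admissible fading laws the one-sided Gaussian is the most severe, with CDF $P_{\gamma_{end}}(\gamma;\gammabar_{end})=\erf(\sqrt{0.5\,\gamma/\gammabar_{end}})$, so inverting the median condition gives $\gammabar_{end}=0.5\,\gamma_{th}/\inverf^{2}(0.5)$ and, with $\gamma_{th}=8$, the constant $4/\inverf^{2}(0.5)\approx17.5848747065$. The factor $0.5/\inverf^{2}(0.5)\approx2.198$ that lifts the instantaneous threshold of $8$ up to $17.58$ is carried entirely by the worst-case CDF, which your deterministic framework has discarded. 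To repair the proposal you would need to reinstate the probabilistic criterion (a quantile condition on $P_{\gamma_{end}}(\gamma_{th};\cdot)$) and the extremal-distribution step; the binomial expansion of $\log^{n}(1+\gamma)$ you begin with is fine as motivation for $\gamma_{th}=8$, but it is only the first half of the argument.
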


\begin{proof}
Note that the CDF of $\gammasub{end}$ is defined by $P_{\gammasub{end}}(\gamma;\gammabarsub{end})=\mathbb{E}[\HeavisideTheta{\gamma-\gammasub{end}}]$, where $\HeavisideTheta{\cdot}$ denotes the Heaviside's theta function \cite[Eq.(1.8.3)]{BibZwillingerBook}. Particularly, in the high-SNR regime, we have $P_{\gammasub{end}}(\gamma_{th};\gammabarsub{end})\approx{0}$ for a certain specified threshold $\gamma_{th}\ll\gammabar_{end}$, thus  we reduce
$\widebar{C}(n;\gammabarsub{end})$ to an asymptotically tight lower bound, that is
\begin{equation}\label{Eq:ApproximatedHOCC}
\widebar{C}(n;\gammabarsub{end})
	\,\gtrapprox\,\int_{\gamma_{th}}^{\infty}\log^{n}\left(1+\gamma\right)\,p_{\gamma_{end}}\left(\gamma;\gammabarsub{end}\right)d\gamma
	\,\ge\,\int_{\gamma_{th}}^{\infty}\log^{n}\left(\gamma\right)\,p_{\gamma_{end}}\left(\gamma;\gammabarsub{end}\right)d\gamma.
\end{equation}
where we suggest to choose $\gamma_{th}\geq{8}$ by numerical observation, especially based on $\log^{n}\left(1+\gamma_{th}\right)\approx\log^{n}(\gamma_{th})$ whose error is given by $\epsilon=1-{{\log^{n}(\gamma_{th})}/{\log^{n}(1+\gamma_{th})}}$, where $\epsilon\in\mathbb{R}^{+}$ is a small number. By saying asymptotically tight-bound, we mean that the difference between the actual HOCC and its lower bound 
becomes zero in the high-SNR; and accordingly we readily note that the high-SNR regime certainly starts when
\begin{equation}\label{Eq:HighSNRRegimeMeasure}
	\psi(\gamma_{th};\gammabarsub{end})=\frac{P_{\gammasub{end}}(\gamma_{th};\gammabarsub{end})}{1-P_{\gammasub{end}}(\gamma_{th};\gammabarsub{end})}\leq{1}.
\end{equation}
Frankly speaking, the fading environment leads in high-SNR regime when $\psi(\gamma_{th};\gammabarsub{end})=1/2$, specially meaning that $\gamma_{end}$ falls below or exceeds $\gamma_{th}$ with the half probability (i.e., $P_{\gammasub{end}}(\gamma_{th};\gammabarsub{end})=1/2$). Accordingly, when $\psi(\gamma_{th};\gammabarsub{end})\ll{1}$, the fading environment is certainly in high-SNR regime. Therefore, for the critical
value $P_{\gammasub{end}}(\gamma_{th};\gammabarsub{end})=1/2$, the minimum average SNR at which transmission leads in the high-SNR regime is given by
\begin{equation}\label{Eq:SNRValueForHighSNRRegime}
	\gammabarsub{end}=\inverseF_{\gammabar}\{P_{\gammasub{end}}(\gamma_{th};\gammabar)\}({1}/{2}),
\end{equation}
where $\inverseF_{x}\{f(x)\}(y)$ denotes the inverse function of $y=f(x)$ such that $\inverseF_{x}\{f(x)\}\left(f(x)\right)=x$. We emphasize that that in the course of linear or non-linear fading environments, the worst-case fading amplitudes are well-known characterized by a one-sided Gaussian fading \cite[Section~2.2.1]{BibAlouiniBook}, so the worst-case SNR distribution follows the CDF given by $P_{\gammasub{end}}(\gamma;\gammabarsub{end})=\erf(\sqrt{0.5{\gamma}/{\gammabarsub{end}}})$, where $\erf(\cdot)$ denotes the error function \cite[Eq. (8.250/1)]{BibGradshteynRyzhikBook}. Consequently, applying \eqref{Eq:SNRValueForHighSNRRegime} on the CDF, we can readily derive $\gammabar_{end}=0.5{\gamma_{th}}/{\inverf^{2}(0.5)}$, where $\inverf(\cdot)$ denotes the inverse error function \cite{BibAbramowitzStegunBook}. Choosing $\gamma_{th}=8$ for a good approximation, $\gammabar_{end}$ will be obtained as in \eqref{Eq:LimitSNRValueForHighSNRRegime}, which proves \theoremref{Theorem:FadingEnvironmentInHighSNRRegime}
\end{proof}

Note that, with the aid of both Jensen's inequality (i.e., $\log^{n}(\mathbb{E}[\gammasub{end}])\leq\mathbb{E}[\log^{n}(\gammasub{end})]$) \cite[Sec.~12.411]{BibGradshteynRyzhikBook} and the fact that $\log(\cdot)$ is a monotonically increasing function over $\mathbb{R}^{+}$,  we can reduce $\widebar{C}(n;\gammabarsub{end})$ to an asymptotically tight lower bound which is well-known in the literature, that is 
\begin{equation}\label{Eq:JensenEqualityBasedApproximatedHOECapacity}
    \widebar{C}(n;\gammabarsub{end})\gtrapprox\log^{n}(\gammabarsub{end}),
\end{equation}
which is however not insightful enough to comprehend the diversity gains since not revealing the existence of differences (i.e., constant gaps in log-domain) among the HOCCs of different fading environments. On the other hand, an asymptotically tight lower bound of the HOCC in generalized fading environments is obtained in the following theorem.

\begin{theorem}\label{Theorem:HighOrderErgodicCapacityForHighSNRRegime}
An asymptotically tight lower-bound of $\widebar{C}(n;\gammabarsub{end})$ in high-SNR regime of fading environments is given by  
\begin{equation}\label{Eq:HighOrderErgodicCapacityForHighSNRRegime}
\widebar{C}(n;\gammabarsub{end})\geq
	\log^{n}\left(\gammabarsub{end}\right)+
	\sum_{k=0}^{n}
	\Binomial{n}{k}
	\mu_{k}\,\log^{n-k}\left(\gammabarsub{end}\right),
	\text{~~where~~}
	\mu_{k}=\Bigl.\frac{\partial^{k}}{\partial{n}^{k}}AF_{n}(\gammabar_{end})\Bigr|_{n=0}.
\end{equation}
\end{theorem}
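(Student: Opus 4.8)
The plan is to start from the asymptotically tight lower bound already established in \eqref{Eq:ApproximatedHOCC}: in the high-SNR regime (i.e.\ for $\gammabarsub{end}$ exceeding the threshold of \theoremref{Theorem:FadingEnvironmentInHighSNRRegime}) the probability mass below $\gamma_{th}$ is negligible and $\log(1+\gamma)\geq\log(\gamma)$ for $\gamma\geq\gamma_{th}$, so that $\widebar{C}(n;\gammabarsub{end})\geq\Expected{\log^{n}(\gammasub{end})}$. The remaining task is to evaluate this expectation in closed form and show it coincides with the claimed right-hand side.

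First I would normalize the SNR by its mean. Writing $\log(\gammasub{end})=\log(\gammabarsub{end})+\log(\gammasub{end}/\gammabarsub{end})$ and using that $n\in\mathbb{N}$, the binomial theorem gives
\begin{equation}
\log^{n}(\gammasub{end})=\sum_{k=0}^{n}\Binomial{n}{k}\log^{n-k}(\gammabarsub{end})\,\log^{k}\!\left(\gammasub{end}/\gammabarsub{end}\right).
\end{equation}
Since $\gammabarsub{end}$ is deterministic, taking expectations term by term reduces the problem to computing the log-moments $\Expected{\log^{k}(\gammasub{end}/\gammabarsub{end})}$.

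The key step is to identify these log-moments with the coefficients $\mu_{k}$. Setting $Y=\log(\gammasub{end}/\gammabarsub{end})$ and recalling from \eqref{Eq:HigherOrderAmountOfFading} that $AF_{n}(\gammabar_{end})=\Expected{(\gammasub{end}/\gammabarsub{end})^{n}}-1=\Expected{e^{nY}}-1$, I observe that $\Expected{e^{nY}}$ is precisely the moment generating function of $Y$ evaluated at the order $n$. Differentiating $k$ times in $n$ under the expectation and setting $n=0$ then yields $\frac{\partial^{k}}{\partial n^{k}}AF_{n}(\gammabar_{end})|_{n=0}=\Expected{Y^{k}}$ for $k\geq1$, i.e.\ $\mu_{k}=\Expected{\log^{k}(\gammasub{end}/\gammabarsub{end})}$, whereas for $k=0$ the additive $-1$ cancels $\Expected{e^{0\,Y}}=1$, giving $\mu_{0}=0$. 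Substituting back, the $k=0$ term of the binomial sum supplies the leading $\log^{n}(\gammabarsub{end})$, and because $\mu_{0}=0$ the surviving terms $k=1,\dots,n$ may be re-indexed to start at $k=0$, producing exactly $\sum_{k=0}^{n}\Binomial{n}{k}\mu_{k}\,\log^{n-k}(\gammabarsub{end})$ and completing the argument.

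The main obstacle is the justification of the first and third steps — extending the truncated integral of \eqref{Eq:ApproximatedHOCC} back to the whole positive axis, and interchanging the $k$-fold $n$-derivative with the expectation defining $AF_{n}(\gammabar_{end})$. Both require the tail of $p_{\gamma_{end}}$ to be light enough that the log-moments $\Expected{\log^{k}(\gammasub{end}/\gammabarsub{end})}$ are finite and differentiation under the integral sign is legitimate; in the high-SNR regime guaranteed by \theoremref{Theorem:FadingEnvironmentInHighSNRRegime} the sub-threshold contribution is asymptotically negligible, which is precisely what makes the resulting bound \emph{asymptotically tight} rather than an exact identity.
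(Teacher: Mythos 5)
Your proposal is correct and lands on exactly the stated formula, but it gets there by a different decomposition than the paper. The paper first represents $\log^{n}(\gammasub{end})$ formally as $\log^{n}(\kappa)+\lim_{k\rightarrow 0}(\partial/\partial k)^{n}(\gammasub{end}^{k}-\kappa^{k})$, pushes this through the truncated integral of \eqref{Eq:ApproximatedHOCC} with $\kappa=\gammabar_{end}$ to obtain $\widebar{C}(n;\gammabarsub{end})\gtrapprox\log^{n}(\gammabar_{end})+\lim_{k\rightarrow 0}(\partial/\partial k)^{n}\bigl[AF_{k}(\gammabar_{end})\,\gammabar_{end}^{k}\bigr]$, and then applies the Leibniz (binomial differentiation) rule to the product $AF_{k}(\gammabar_{end})\cdot\gammabar_{end}^{k}$, using $(\partial/\partial k)^{j}\gammabar_{end}^{k}\big|_{k=0}=\log^{j}(\gammabar_{end})$ to produce the sum. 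You instead expand $(\log(\gammabar_{end})+Y)^{n}$ with $Y=\log(\gammasub{end}/\gammabar_{end})$ by the ordinary binomial theorem and identify $\mu_{k}=\Expected{Y^{k}}$ by recognizing $AF_{n}(\gammabar_{end})+1$ as the moment generating function of $Y$. The two routes are algebraically equivalent (the Leibniz expansion of $(\partial/\partial k)^{n}\,e^{k\log\gammabar_{end}}\,\Expected{e^{kY}}$ at $k=0$ is precisely your binomial expansion in disguise), but yours is more elementary and more transparent about what $\mu_{k}$ \emph{is} — the $k$th log-moment of the normalized SNR — whereas the paper's derivative-based formalism is what it then reuses operationally in Section 3.1 to compute $\mu_{k}$ for specific fading families via $\Psi_{(n)}$ and $\Phi_{(n)}$. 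Your handling of the $k=0$ term via $\mu_{0}=AF_{0}(\gammabar_{end})=0$ correctly accounts for the seemingly redundant leading $\log^{n}(\gammabar_{end})$ in the statement, and the caveats you flag (extending the truncated integral back to $(0,\infty)$, where $\log^{n}(\gamma)$ changes sign for $\gamma<1$, and differentiating under the expectation) are real gaps that the paper's own proof glosses over in exactly the same way, so you are not below its level of rigor.
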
 

\begin{proof}
Using \cite[Eqs. (5.1.4/a) and (5.1.4/i)]{BibZwillingerBook}, we express $\log^{n}(\gammasub{end})$ in terms of the $n$th order differentiation, that is
\begin{equation}
 \log^{n}(\gamma_{end})=\gamma_{end}^{-k}\,\Bigl\{
	\alpha^{k}\log^{n}(\kappa)+
		\Bigl(\!\frac{\partial}{\partial{k}}\!\Bigr)^{\!n}\!\bigl(\gamma_{end}^{k}-\kappa^{k}\bigr)\Bigr\},   
\end{equation}
where $n,k\in\mathbb{N}$ and the constant term $\kappa\in\mathbb{C}$. The right part of this result depends on a arbitrary value of $k$, but neither does the left part. For simplicity, we could choose $k=0$. Then, we have $\log^{n}(\gamma_{end})=\log^{n}(\kappa)+\lim_{k\rightarrow{0}}\,\bigl(\!\frac{\partial}{\partial{k}}\!\bigr)^{\!n}\!\bigl(\gamma_{end}^{k}-\kappa^{k}\bigr)$. Subsequently, substituting this result into \eqref{Eq:HOCC} and exploiting \eqref{Eq:ApproximatedHOCC}, we simply obtain $\widebar{C}(n;\gammabarsub{end})$ as follows
\begin{equation}
\label{Eq:AFApproximatedHOCC}
\widebar{C}(n;\gammabarsub{end})\gtrapprox\int_{0}^{\infty}\log^{n}\left(\gamma\right)\,p_{\gamma_{end}}\left(\gamma;\gammabarsub{end}\right)d\gamma 
	=\log^{n}(\kappa)+\lim_{k\rightarrow{0}}\left(\frac{\partial}{\partial{k}}\right)^{\!n}\int_{0}^{\infty}(\gamma_{end}^{k}-\kappa^{k})\,p_{\gamma_{end}}\left(\gamma;\gammabarsub{end}\right)d\gamma.
\end{equation}
Further, using the definition of the $k$th moment of $\gamma_{end}$, i.e., $\mathbb{E}[\gamma_{end}^{k}]\trigeq\int_{0}^{\infty}\gamma^{k}p_{\gammasub{end}}(\gamma)d\gamma$, and then choosing $\kappa=\gammabar_{end}$, \eqref{Eq:AFApproximatedHOCC} is easily rewritten as $\widebar{C}(n;\gammabarsub{end})\gtrapprox\log^{n}(\gammabar_{end})+\lim_{k\rightarrow{0}}\,({\partial}/{\partial{k}})^{n}AF_{k}(\gammabar_{end})\,\gammabar_{end}^{k}$ which is eventually and readily simplified into \eqref{Eq:HighOrderErgodicCapacityForHighSNRRegime} by using the binomial differentiation rule \cite[Section 0.42]{BibGradshteynRyzhikBook} and \cite[Eq. (5.1.4/s)]{BibZwillingerBook}, which proves \theoremref{Theorem:HighOrderErgodicCapacityForHighSNRRegime}.
\end{proof}

Note that the first term on the right part of \eqref{Eq:HighOrderErgodicCapacityForHighSNRRegime}, i.e., $\scalemath{0.9}{0.9}{\textstyle\log^{n}(\gammabarsub{end})}$ is similar to \eqref{Eq:JensenEqualityBasedApproximatedHOECapacity}, which evidently indicates the slope in log-domain. The second term $\scalemath{0.9}{0.9}{\textstyle\sum_{k=0}^{n}\Binomial{n}{k}\mu_{k}\,\log^{n-k}\left(\gammabar_{end}\right)}$ is related to the diversity gains, and it points out the differences (i.e., constant gaps in log-domain) among different fading environments, and the second term can be obtained in closed form  as shown in the following subsection. Further from numerical point of view, since the $n$th-order derivative can be easily and efficiently accomplished utilizing Grunwald-Letnikov's differentiation \cite{BibGrunwald1867,BibLetnikov1868,BibKilbasSrivastavaTrujillo2006}, we can readily achieve the numerical computation of $\mu_{k}$ as follows 
\begin{equation}\label{Eq:AuxiliaryCoefficientForHighSNRRegimeWithGrunwaldLetnikovDifferentiation}
\mu_{k}=\frac{1}{2\epsilon^{k}}
	\sum_{j=0}^{k}(-1)^{j}\Binomial{k}{j}
		\biggl\{(-1)^{k}AF_{(j\,\epsilon)}(\gammabar_{end})+AF_{(-j\,\epsilon)}(\gammabar_{end})\biggr\},
\end{equation}
where $\epsilon$ is chosen as a small number (e.g. $\epsilon=0.001$). In addition, note that the other accomplishment of \theoremref{Theorem:HighOrderErgodicCapacityForHighSNRRegime} is to provide an asymptotically tight lower-bound for the ACC as shown in the following theorem.

\begin{theorem}\label{Theorem:ErgodicCapacityForHighSNRRegime}
An asymptotically tight lower-bound of the ACC $\widebar{C}(\gammabarsub{end})\equiv\mathbb{E}[\log(1+\gammasub{end})]$
in high-SNR regime of fading environments is given by
\begin{equation}\label{Eq:ErgodicCapacityForHighSNRRegime}
	\widebar{C}(\gammabarsub{end})\gtrapprox\log\left(\gammabarsub{end}\right)+\mu,\text{~~where~~}\mu=\left.\frac{\partial}{\partial{n}}AF_{n}(\gammabar_{end})\right|_{n=0}.
\end{equation}
\end{theorem}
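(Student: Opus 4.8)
The plan is to deduce this bound as the $n=1$ specialization of \theoremref{Theorem:HighOrderErgodicCapacityForHighSNRRegime}, so that essentially no new analytic work is required. First I would observe that the ACC is exactly the first-order HOCC: setting $n=1$ in \eqref{Eq:InstantaneousHigherOrderChannelCapacity} gives $C(1;\gammasub{end})=\log(1+\gammasub{end})$, hence $\widebar{C}(\gammabarsub{end})=\mathbb{E}[\log(1+\gammasub{end})]=\widebar{C}(1;\gammabarsub{end})$. I would then substitute $n=1$ directly into the lower bound \eqref{Eq:HighOrderErgodicCapacityForHighSNRRegime}, which yields
\[
\widebar{C}(1;\gammabarsub{end})\geq\log(\gammabarsub{end})+\Binomial{1}{0}\mu_{0}\log(\gammabarsub{end})+\Binomial{1}{1}\mu_{1},
\]
so that only the $k=0$ and $k=1$ terms of the binomial sum survive.

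The one small verification is that the $k=0$ term vanishes. Evaluating \eqref{Eq:HigherOrderAmountOfFading} at order zero gives $AF_{0}(\gammabar_{end})=\Expected{\gamma^{0}_{end}}/\Expected{\gamma_{end}}^{0}-1=1-1=0$, whence $\mu_{0}=AF_{0}(\gammabar_{end})=0$. Consequently the term $\mu_{0}\log(\gammabarsub{end})$ drops out, and writing $\mu=\mu_{1}=\left.\frac{\partial}{\partial{n}}AF_{n}(\gammabar_{end})\right|_{n=0}$ collapses the bound to $\widebar{C}(\gammabarsub{end})\gtrapprox\log(\gammabarsub{end})+\mu$, which is precisely \eqref{Eq:ErgodicCapacityForHighSNRRegime}. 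The asymptotic tightness is inherited verbatim from the $n=1$ instance of \theoremref{Theorem:HighOrderErgodicCapacityForHighSNRRegime}: it rests on the high-SNR approximation \eqref{Eq:ApproximatedHOCC}, whose gap closes once $\gammabarsub{end}$ exceeds the threshold fixed in \theoremref{Theorem:FadingEnvironmentInHighSNRRegime}.

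Since all of the heavy lifting --- expressing $\log^{n}(\gammasub{end})$ through a derivative in the order parameter and recognizing the resulting moment integral as $AF_{k}(\gammabar_{end})\,\gammabar_{end}^{k}$ --- was already carried out in the proof of \theoremref{Theorem:HighOrderErgodicCapacityForHighSNRRegime}, there is no genuine obstacle here; the only point meriting care is the order-zero evaluation $AF_{0}(\gammabar_{end})=0$, which guarantees that the single extra term not already present in the Jensen-type bound \eqref{Eq:JensenEqualityBasedApproximatedHOECapacity} is exactly the constant gap $\mu$.
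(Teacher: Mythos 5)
Your proposal is correct and matches the paper's proof, which simply invokes $\widebar{C}(\gammabarsub{end})\equiv\widebar{C}(1;\gammabarsub{end})$ together with \theoremref{Theorem:HighOrderErgodicCapacityForHighSNRRegime}; you merely make explicit the detail the paper leaves implicit, namely that $\mu_{0}=AF_{0}(\gammabar_{end})=0$ kills the extra $k=0$ term. No gaps.
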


\begin{proof}
Proof is obvious using $\widebar{C}(\gammabarsub{end})\trigeq\widebar{C}(1;\gammabarsub{end})$ with \theoremref{Theorem:HighOrderErgodicCapacityForHighSNRRegime}.
\end{proof}

Consequently, the analysis presented above eliminates the mathematically tedious analysis for the HOCC in high-SNR regimes by means of establishing its sharp characterization of the HOCC using asymptotically tight-bounds. As such, the difference between the actual HOCC and its asymptotic bound is identically equal to zero for $\gammabar_{end}\gg{1}$.

\subsection{Auxiliary tools to obtain the auxiliary coefficient}\label{Sec:SectionIIIA}
Note that exponential-type random distributions such as exponential, gamma, Weibull, generalized gamma, etc. have been widely used in the literature to describe the fading phenomena,  i.e., to statistically characterize the variation in signal strength as a result of multipath propagation. Note that the higher-order moment of exponential-type distributions are mostly in terms of the product of the gamma function $\Gamma\left(a+b\,k\right)$ and the Gaussian function $\exp\left(a+b\,k^{2}\right)$, where $\Gamma\left(\cdot\right)$ denotes the Gamma function \cite[Eq. (6.5.3)]{BibAbramowitzStegunBook}. Referring to \theoremref{Theorem:HighOrderErgodicCapacityForHighSNRRegime}, the $n$th order differentiation of both $\Gamma\left(a+b\,k\right)$ and $\exp\left(a+b\,k^{2}\right)$, i.e.,
\begin{subequations}
\begin{eqnarray}
\label{Eq:HigherOrderDifferentiationOfGammaFunctionWithDerivative}
\Psi_{(n)}\left(a,b,k\right)&=&\left({\partial}/{\partial{k}}\right)^{n}\Gamma\left(a+b\,k\right),\\
\label{Eq:HigherOrderDifferentiationOfGaussianFunctionWithDerivative}
\Phi_{(n)}\left(a,b,k\right)&=&\left({\partial}/{\partial{k}}\right)^{n}\exp\left(a+b\,k^{2}\right),
\end{eqnarray}
\end{subequations}
are strictly required but easily computed by Grunwald-Letnikov's numerical differentiation \cite{BibGrunwald1867,BibLetnikov1868,BibKilbasSrivastavaTrujillo2006}. Furthermore, using built-in functions in the core software coding of such mathematical software packages as \Mathematica, both \eqref{Eq:HigherOrderDifferentiationOfGammaFunctionWithDerivative} and \eqref{Eq:HigherOrderDifferentiationOfGaussianFunctionWithDerivative} are also respectively computed by
\begin{subequations}\label{Eq:AxuiliaryFunctionMathematicaCode}
\begin{eqnarray}
\label{Eq:AxuiliaryFunctionMathematicaCodeA}
\nonumber
\multido{}{4}{\!}&\begin{minipage}{0.94\columnwidth} 
		\definecolor{light-gray}{gray}{0.90}
		\lstset{language=Mathematica,basicstyle=\fontsize{7}{8}\selectfont,breaklines=true,morekeywords={PhiFunction},backgroundcolor=\color{light-gray}}
		\begin{lstlisting}
 (*Implementation for the HOD of the Gamma function :*)
 		\end{lstlisting}
\end{minipage}&\multido{}{10}{~}\\[-7mm]
\multido{}{4}{\!}&\begin{minipage}{0.94\columnwidth} 
		\definecolor{light-gray}{gray}{0.90}
		\lstset{language=Mathematica,basicstyle=\fontsize{7}{8}\selectfont,breaklines=true,morekeywords={PhiFunction},backgroundcolor=\color{light-gray}}
		\begin{lstlisting}
 PhiFunction[n_?Integers, a_, b_, k_] := Derivative[0, 0, n][Function[{u, v, w}, Gamma[u + v w]]][a, b, k];
		\end{lstlisting}
\end{minipage}&\multido{}{10}{~}\\[-3mm]
\label{Eq:AxuiliaryFunctionMathematicaCodeB}
\nonumber
\multido{}{4}{\!}&\begin{minipage}{0.94\columnwidth} 
		\definecolor{light-gray}{gray}{0.90}
		\lstset{language=Mathematica,basicstyle=\fontsize{7}{8}\selectfont,breaklines=true,morekeywords={PhiFunction},backgroundcolor=\color{light-gray}}
		\begin{lstlisting}
 (*Implementation for the HOD of the exponential function :*)
		\end{lstlisting}
\end{minipage}&\multido{}{10}{~}\\[-7mm]
\multido{}{4}{\!}&\begin{minipage}{0.94\columnwidth}
		\definecolor{light-gray}{gray}{0.90}
		\lstset{language=Mathematica,basicstyle=\fontsize{7}{8}\selectfont,breaklines=true,morekeywords={PsiFunction},backgroundcolor=\color{light-gray}}
		\begin{lstlisting}
 PsiFunction[n_?Integers, a_, b_, k_] := Derivative[0, 0, n][Function[{u, v, w}, Exp[u + v w^2]]][a, b, k];
		\end{lstlisting}
\end{minipage}&\multido{}{10}{~}
\end{eqnarray}
\end{subequations}
We note that both \eqref{Eq:AxuiliaryFunctionMathematicaCodeA} and \eqref{Eq:AxuiliaryFunctionMathematicaCodeB} are numerically efficient for small orders $n\in\mathbb{N}$ but erroneous for higher orders. Therefore, their closed forms are evidently required. To the best of our knowledge, $\Psi_{(n)}\left(a,b,k\right)$ and $\Phi_{(n)}\left(a,b,k\right)$ are not given in closed-form expressions in the literature. In the following theorems, we obtain both $\Psi_{(n)}\left(a,b,k\right)$ and $\Phi_{(n)}\left(a,b,k\right)$ in two closed-form expressions by means of using special functions widely used in the literature.

\begin{theorem}\label{Theorem:HighOrderDifferentiationOfGammaFunction}
$\Psi_{(n)}\left(a,b,k\right)\trigeq\left({\partial}/{\partial{k}}\right)^{n}\Gamma\left(a+b\,k\right)$ is given in a closed-form expression as follows 
\begin{equation}\label{Eq:HigherOrderDifferentiationOfGammaFunction}
\Psi_{(n)}\left(a,b,k\right)=
 	\fact{n}\,b^{n}\,
 		\MeijerG[right]{n+2,0}{n+1,n+2}{1}{1,1,\ldots,1}{a+bk,0,0,\ldots,0}
 			+
 			{(-1)}^{n}\,\fact{n}\,b^{n}\,
 				\MeijerG[right]{1,n+1}{n+1,n+2}{1}{1,1,\ldots,1}{a+bk,0,0,\ldots,0},
\end{equation}
where $\MeijerGDefinition{m,n}{p,q}{\cdot}$ denotes the Meijer's G function\emph{\cite[Eq. (9.301)]{BibGradshteynRyzhikBook}}.
\end{theorem}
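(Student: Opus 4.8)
The plan is to peel off the affine inner argument with the chain rule, reduce the claim to a closed form for the ordinary derivatives of the Gamma function, and then recognise those derivatives, through their Mellin--Barnes form, as the weighted sum of the two Meijer $G$-functions on the right of \eqref{Eq:HigherOrderDifferentiationOfGammaFunction}. Since $a+bk$ is affine in $k$, I would first write $\Psi_{(n)}(a,b,k)=\left(\partial/\partial k\right)^{n}\Gamma(a+bk)=b^{n}\,\Gamma^{(n)}(a+bk)$, where $\Gamma^{(n)}$ is the $n$th derivative of the Gamma function in its argument. Differentiating Euler's integral $\Gamma(\zeta)=\int_{0}^{\infty}t^{\zeta-1}e^{-t}\,dt$ $n$ times under the integral sign (legitimate by uniform convergence on compact subsets of $\{\Re\zeta>0\}$, the integrand $t^{\zeta-1}(\log t)^{n}e^{-t}$ being dominated near both endpoints) gives $\Gamma^{(n)}(\zeta)=\int_{0}^{\infty}t^{\zeta-1}(\log t)^{n}e^{-t}\,dt$, with the general parameter range recovered afterwards by analytic continuation. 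It therefore suffices to put this single quantity in closed form.

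The key step is to write each claimed $G$-function in its Mellin--Barnes form \cite[Eq. (9.301)]{BibGradshteynRyzhikBook} and to notice that, after elementary cancellation, both carry the \emph{same} integrand. Writing $\zeta=a+bk$, the first $G$-function produces an integrand $\Gamma(\zeta-s)\,[\Gamma(-s)/\Gamma(1-s)]^{\,n+1}$ and the second $\Gamma(\zeta-s)\,[\Gamma(s)/\Gamma(1+s)]^{\,n+1}$; using $\Gamma(-s)/\Gamma(1-s)=-1/s$ and $\Gamma(s)/\Gamma(1+s)=1/s$ both collapse to $\Gamma(\zeta-s)/s^{\,n+1}$, the first picking up an extra factor $(-1)^{n+1}$. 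Thus
\[
\MeijerG[right]{n+2,0}{n+1,n+2}{1}{1,1,\ldots,1}{a+bk,0,0,\ldots,0}=\frac{(-1)^{n+1}}{2\pi\imaginary}\int_{\mathcal L_{1}}\frac{\Gamma(\zeta-s)}{s^{\,n+1}}\,ds,
\]
whereas $\MeijerG[right]{1,n+1}{n+1,n+2}{1}{1,1,\ldots,1}{a+bk,0,0,\ldots,0}=\tfrac{1}{2\pi\imaginary}\int_{\mathcal L_{2}}\Gamma(\zeta-s)\,s^{-(n+1)}\,ds$. The only difference is the placement forced by the $(m,n)$ indices: $\mathcal L_{1}$ runs to the left of all poles, while $\mathcal L_{2}$ keeps the pole at $s=0$ on its left and the poles $s=\zeta+\ell$ on its right.

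I would then evaluate both by residues, closing each contour to the right. The common integrand has an order-$(n+1)$ pole at $s=0$ with residue $R_{0}=\tfrac{1}{n!}\left(\partial/\partial s\right)^{n}\Gamma(\zeta-s)\big|_{s=0}=\tfrac{(-1)^{n}}{n!}\Gamma^{(n)}(\zeta)$, and simple poles at $s=\zeta+\ell$ $(\ell\ge0)$ with residues $R_{\ell}=-\tfrac{(-1)^{\ell}}{\ell!}(\zeta+\ell)^{-(n+1)}$, whose sum $T=\sum_{\ell\ge0}R_{\ell}$ converges since $\Gamma(\zeta-s)$ decays between its poles by the reflection formula. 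Accounting for the clockwise orientation and the $(-1)^{n+1}$ prefactor, the first $G$-function equals $(-1)^{n}(R_{0}+T)$ and the second equals $-T$. Multiplying by $n!\,b^{n}$ and $(-1)^{n}n!\,b^{n}$ respectively and adding, the tail $T$ cancels identically and only $(-1)^{n}n!\,b^{n}R_{0}=b^{n}\Gamma^{(n)}(\zeta)=\Psi_{(n)}(a,b,k)$ survives, which establishes \eqref{Eq:HigherOrderDifferentiationOfGammaFunction}.

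The main obstacle is the contour bookkeeping at the (unit) argument of the $G$-functions: one must confirm that the right-closing arcs vanish and, above all, verify that the two $G$-functions are weighted precisely so that their spurious simple-pole series $T$ cancel, leaving only the order-$(n+1)$ residue at the origin that reproduces $\Gamma^{(n)}$. Everything else --- the chain rule, the elementary $\Gamma$-ratio simplifications, and the justification of differentiation under the integral sign --- is routine.
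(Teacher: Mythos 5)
Your proof is correct, but it reaches \eqref{Eq:HigherOrderDifferentiationOfGammaFunction} by a genuinely different route from the paper. The paper works \emph{forwards}: it splits $\Gamma(a+bk)$ into the lower and upper incomplete Gamma functions at $x=1$, differentiates under the integral sign to get the two integrals of $u^{a+bk-1}e^{-u}\log^{n}(u)$ over $(0,1)$ and $(1,\infty)$ in \eqref{Eq:PsiFunctionUpperLowerIncompleteGammaExpansion}, inserts the two known Meijer-$G$ representations of $\log^{n}(u)$ (one valid on each subinterval), interchanges the order of integration, and reads off the pair of Mellin--Barnes integrals in \eqref{Eq:PsiFunctionWithMellinBarnesIntegral}, which are exactly the two $G$-functions in the statement. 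You instead work \emph{backwards}: you reduce $\Psi_{(n)}$ to $b^{n}\Gamma^{(n)}(a+bk)$, expand the two claimed $G$-functions by the definition, observe that after the cancellations $\Gamma(\mp s)/\Gamma(1\mp s)=\mp 1/s$ both integrands collapse to $\Gamma(\zeta-s)/s^{n+1}$ up to sign, and then evaluate by residues, showing that the weights $n!\,b^{n}$ and $(-1)^{n}n!\,b^{n}$ are precisely those for which the tail series over the poles $s=\zeta+\ell$ cancels and only the order-$(n+1)$ residue at $s=0$, namely $(-1)^{n}\Gamma^{(n)}(\zeta)/n!$, survives. I have checked your residue bookkeeping (the values of $R_{0}$, $R_{\ell}$, the pole placement forced by the indices $(n+2,0)$ versus $(1,n+1)$, and the clockwise orientation) and it is consistent. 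The paper's derivation is constructive --- it explains \emph{why} these two particular $G$-functions appear, at the cost of invoking the two tabulated $\log^{n}(u)$ representations and a Fubini-type interchange; your verification is self-contained modulo the definition of the $G$-function, but it presupposes the answer and shifts the burden onto the contour analysis (vanishing of the right-closing arcs at unit argument and the separation of the pole at $s=0$ from those at $s=\zeta+\ell$, which tacitly requires $\Re(a+bk)>0$ before continuation). Either argument is acceptable; it would strengthen yours to state the hypothesis $\Re(a+bk)>0$ explicitly where the contours are drawn.
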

\vspace{-2mm}

\begin{proof}
Note that using both the upper-incomplete Gamma function $\Gamma\left(\alpha,x\right)\!=\!\int_{x}^{\infty}e^{-u}u^{\alpha-1}du$\cite[Eq. (8.350/1)]{BibGradshteynRyzhikBook} and the lower-incomplete Gamma function $\gamma\left(\alpha,x\right)\!=\!\int_{0}^{x}e^{-u}u^{\alpha-1}du$\cite[Eq. (8.350/2)]{BibGradshteynRyzhikBook} and then utilizing the well-known relation $\Gamma\left(\alpha\right)=\Gamma\left(\alpha,x\right)+\gamma\left(\alpha,x\right)$ \cite[Eq. (8.356/3)]{BibGradshteynRyzhikBook}, we can rewrite $\Psi_{(n)}\left(a,b,k\right)$ as 
\begin{equation}\label{Eq:PsiFunctionUpperLowerIncompleteGammaExpansion}
	\Psi_{(n)}\left(a,b,k\right)=b^{n}\biggl\{\int_{0}^{1}u^{a+b\,k-1}e^{-u}\log^{n}(u)\,du+
		\int_{1}^{\infty}u^{a+b\,k-1}e^{-u}\log^{n}(u)\,du\biggr\}.
\end{equation}
The Meijer's G representation of $\log^{n}(u)$ is given by \cite[Eq. (01.04.26.0046.01)]{BibWolfram2010Book} for $u>1$; and by \cite[Eq. (01.\allowbreak{}04.\allowbreak{}26.\allowbreak{}0045.\allowbreak{}01)]{BibWolfram2010Book} for $0<u\leq{1}$. Note that, referring to \cite[Eq. (9.301)]{BibGradshteynRyzhikBook}, the Mellin-Barnes representations of both Meijer's G functions can be obtained and then substituted into \eqref{Eq:PsiFunctionUpperLowerIncompleteGammaExpansion}. Accordingly,  using \cite[Eqs. (8.350/1), (8.350/2) and (8.356/3)]{BibGradshteynRyzhikBook} after changing the order of the integrals, we rewrite \eqref{Eq:PsiFunctionUpperLowerIncompleteGammaExpansion} in terms of the Mellin-Barnes contour integration, that is
\begin{equation}\label{Eq:PsiFunctionWithMellinBarnesIntegral}
\Psi_{(n)}\left(a,b,k\right)=n!\,b^{n}\biggl\{
	\frac{1}{2\pi\imaginary}\oint_{\mathcal{C}_1}\frac{\Gamma^{n+1}(-s)\Gamma(a+b\,k+s)}{\Gamma^{n+1}(1-s)}ds
		+
		\frac{(-1)^{n}}{2\pi\imaginary}\oint_{\mathcal{C}_2}\frac{\Gamma^{n+1}(s)\Gamma(a+b\,k+s)}{\Gamma^{n+1}(1+s)}ds\biggr\}.
\end{equation}
where the contour integrals $C_1$ and $C_2$ are chosen counter-clockwise in order to ensure the convergence. Finally, using the Mellin-Barnes representation of Meijer's G function\cite[Eq. (9.301)]{BibGradshteynRyzhikBook}, \eqref{Eq:PsiFunctionWithMellinBarnesIntegral} can be represented in terms of Meijer's G function as shown in \eqref{Eq:HigherOrderDifferentiationOfGammaFunction}, which proves \theoremref{Theorem:HighOrderDifferentiationOfGammaFunction}.
\end{proof}

\begin{theorem}\label{Theorem:HighOrderDifferentiationOfExponentialFunction}
$\Phi_{(n)}\left(a,b,k\right)\trigeq\left({\partial}/{\partial{k}}\right)^{n}\exp\left(a+b\,k^{2}\right)$ is given in a closed-form expression as follows
\begin{equation}\label{Eq:HigherOrderDifferentiationOfGaussianFunction}
\Phi_{(n)}\left(a,b,k\right)=
	\pi{e^a}{(-2)^n}b^{n/2}\MeijerG[right]{2,0}{2,3}{b\,k^2}{\frac{1-n}{2},\frac{1-n}{2}}{0,\frac{1}{2},\frac{1-n}{2}}.
\end{equation}
\end{theorem}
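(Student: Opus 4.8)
The plan is to mirror the Mellin--Barnes strategy already used for \theoremref{Theorem:HighOrderDifferentiationOfGammaFunction}. First I would factor out the constant, writing $\Phi_{(n)}(a,b,k)=e^{a}\,(\partial/\partial k)^{n}e^{b k^{2}}$, so that only the pure Gaussian has to be differentiated. I would then replace $e^{b k^{2}}$ by its Mellin--Barnes representation, $e^{b k^{2}}=\frac{1}{2\pi\imaginary}\oint_{\mathcal{C}}\Gamma(s)\,(-b k^{2})^{-s}\,ds$ (equivalently the Meijer-$G$ form of the exponential), with the contour $\mathcal{C}$ placed so as to separate the poles of $\Gamma(s)$ and guarantee convergence.

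Next I would interchange the contour integral with $(\partial/\partial k)^{n}$ and apply the general-order power rule $(\partial/\partial k)^{n}k^{\mu}=\frac{\Gamma(\mu+1)}{\Gamma(\mu+1-n)}k^{\mu-n}$ to the factor $k^{-2s}$; this is legitimate term-by-term on the contour and, for integer $n$, reproduces the usual Hermite-polynomial form of the derivative. The decisive algebraic step is then Legendre's duplication formula $\Gamma(2z)=\frac{2^{2z-1}}{\sqrt{\pi}}\,\Gamma(z)\Gamma(z+\tfrac12)$, applied to both $\Gamma(1-2s)$ and $\Gamma(1-2s-n)$ in order to trade the doubled argument $2s$ for $s$. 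This collapse is precisely what manufactures the prefactors $(-2)^{n}$, $b^{n/2}$ and $\pi$, and turns the integrand into a clean ratio of four Gamma functions.

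Finally, after the change of variable $s\mapsto -s$ together with a shift by $n/2$ that absorbs the explicit $k^{-n}$ into the common factor $b^{n/2}$, I would read off the resulting contour integral as the defining Mellin--Barnes integral of Meijer's $G$ function and match the surviving Gamma factors against the standard template, obtaining the top list $\{\tfrac{1-n}{2},\tfrac{1-n}{2}\}$ and the bottom list $\{0,\tfrac12,\tfrac{1-n}{2}\}$, i.e. the stated $G^{2,0}_{2,3}(b k^{2}\mid\cdots)$. As a consistency check I would verify the base case $n=0$: there the repeated top/bottom parameter $\tfrac12$ cancels one copy of $\Gamma(\tfrac12-s)$, the reflection formula collapses the integral to $e^{b k^{2}}/\pi$, and the right-hand side returns $e^{a+b k^{2}}$, exactly as required.

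The main obstacle I anticipate is the bookkeeping around the two residue families of $G^{2,0}_{2,3}$. The honest $n$th derivative is a Hermite polynomial times $e^{b k^{2}}$, which is entire in $k$, whereas naive term-by-term fractional differentiation suggests only the integer-power series; I must therefore retain \emph{both} the integer- and half-integer-power pole families and confirm that they assemble into the correct analytic continuation (consistent with the Grünwald--Letnikov evaluation invoked elsewhere). Closely tied to this is fixing the branch of $b^{n/2}$ and $(b k^{2})^{s}$ and the orientation of $\mathcal{C}$ so that the duplication constants and the sign $(-2)^{n}$ emerge exactly. I would settle these points by checking the low orders $n=0,1,2$ against direct differentiation before asserting the general parameter match; an alternative that avoids the branch subtleties is to start instead from the (non-reduced) representation $e^{b k^{2}}=\pi\,G^{2,0}_{2,3}(b k^{2}\mid\tfrac12,\tfrac12;0,\tfrac12,\tfrac12)$ and apply the operator identity $(\partial/\partial k)^{n}=2^{n}b^{n/2}\,(z^{1/2}\partial/\partial z)^{n}$ with $z=b k^{2}$, letting the Meijer-$G$ parameter-shift rules carry the parameters into the stated form.
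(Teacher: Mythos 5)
Your proposal follows essentially the same route as the paper's proof: represent the Gaussian $e^{b k^{2}}$ by a Meijer-$G$/Mellin--Barnes contour integral (the paper starts from the $\pi\,G^{1,0}_{1,2}$ representation of Prudnikov Eq.~(8.4.3/5)), differentiate $n$ times under the contour via the power rule, apply the duplication formula to manufacture the $(-2)^{n}b^{n/2}$ prefactor, and read off the resulting integrand as $G^{2,0}_{2,3}$. The one caveat --- that the naive representation $\frac{1}{2\pi i}\oint\Gamma(s)\,z^{-s}\,ds$ of $e^{-z}$ is not directly valid for the growing exponential, so one must start from the $G^{1,0}_{1,2}$ (or non-reduced $G^{2,0}_{2,3}$) form as the paper does --- is a point you already flag and resolve yourself via your stated alternative.
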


\begin{proof}
Note that substituting the Mellin-Barnes representation of Meijer's G function\cite[Eq.(9.301)]{BibGradshteynRyzhikBook} into \cite[Eq. (8.4.3/5)]{BibPrudnikovBookVol3} and then performing algebraic manipulations, we can re-write $\Phi_{(n)}\left(a,b,k\right)$ as
\begin{equation}\label{Eq:PhiFunctionWithSimplifiedMellinBarnesIntegral}
\Phi_{(n)}\left(a,b,k\right)=
	\pi\exp(a){\left(\frac{\partial}{\partial{k}}\right)^{n}\MeijerG{1,0}{1,2}{{b\,k^2}}{{1}/{2}}{0,{1}/{2}}}
	=\pi\exp{a}(-2)^{n}\,b^{{n}/{2}}
	\frac{1}{2\pi\imaginary}\!\oint_{\mathcal{C}}
		\frac{\Gamma(s\frac{n+1}{2})\Gamma(s+\frac{n}{2})(b\,k^2)^{-s}}{\Gamma(\frac{1}{2}-s)\Gamma(\frac{1}{2}+s)\Gamma(\frac{1}{2}+s)}ds,
\end{equation}
which can be represented in terms of Meijer's G function as shown in \eqref{Eq:HigherOrderDifferentiationOfGaussianFunction}, which proves \theoremref{Theorem:HighOrderDifferentiationOfExponentialFunction}.
\end{proof}

With the aid of \cite[Eqs. (9.303) and (9.304)]{BibGradshteynRyzhikBook}, both \eqref{Eq:HigherOrderDifferentiationOfGammaFunction} and \eqref{Theorem:HighOrderDifferentiationOfExponentialFunction} can also be expressed in terms of the sum of generalized hypergeometric functions $\Hypergeom{n}{n+1}{}{\cdot}{\cdot}{\cdot}$ \cite[Eq.~\allowbreak{}(9.14/1)]{BibGradshteynRyzhikBook}, which may be useful for researchers and theoreticians. In addition, referring to \theoremref{Theorem:ErgodicCapacityForHighSNRRegime}, needed for the asymptotic analysis of the ACC in fading environments are both $\Psi_{(1)}\left(a,b,k\right)$ and $\Phi_{(1)}\left(a,b,k\right)$. Accordingly, substituting $n=1$ in \eqref{Eq:HigherOrderDifferentiationOfGammaFunction} and using \cite[Eq.~(8.2.2/3)]{BibPrudnikovBookVol3} and \cite[Eq.~(8.4.51/1)]{BibPrudnikovBookVol3}, we have 
\begin{equation}\label{Eq:FirstOrderDifferentiationOfGammaFunction}
\Psi_{(1)}\left(a,b,k\right)=b\,\Gamma\left(a+b\,k\right)\PolyGamma{0}{a+b\,k},
\end{equation}
where $\PolyGamma{0}{\cdot}$ is the digamma function defined in \cite[Eq. (6.3.1)]{BibAbramowitzStegunBook}. Similarly, substituting $n=1$ in \eqref{Eq:HigherOrderDifferentiationOfGaussianFunction} and using both \cite[Eq.~(8.2.2/15)]{BibPrudnikovBookVol3} and \cite[Eq.~(8.4.3/5)]{BibPrudnikovBookVol3}, we have 
\begin{equation}\label{Eq:FirstOrderDifferentiationOfGaussianFunction}
\Phi_{(1)}\left(a,b,k\right)=2bk\,\exp\left(a+b\,k^2\right).
\end{equation}
In addition, the $k$th order moment of such fading distributions as $\kappa-\mu$ and $\eta-\mu$ distributions consists of generalized hypergeometric function $\Hypergeom{p}{q}{}{a_1,a_2,\ldots,a_p}{b_1,b_2,\ldots,b_p}{x}$ \cite[Eq. (7.2.3/1)]{BibPrudnikovBookVol3} whose parameters $\{a_{\ell}\}_{1}^{p}$, $\{b_{\ell}\}_{1}^{q}$ and $x$ could be a function of the moment order $k$. As a consequence of both \theoremref{Theorem:HighOrderErgodicCapacityForHighSNRRegime} and \theoremref{Theorem:ErgodicCapacityForHighSNRRegime}, the \emph{higher-order differentiation of generalized hypergeometric} (HOD-GH) function is defined in the following theorem.

\begin{theorem}\label{Theorem:HODGHFunction}
The HOD-GH function is defined by
\vspace{-2mm}
\begin{equation}\label{Eq:HigherOrderDerivativeGeneralizedHypergeometricFunction}
\DHypergeom{p}{q}{}{a_1,\ldots,a_p}{b_1,\ldots,b_q}{x}{m_1,\ldots,m_p}{n_1,\ldots,n_q}{k}=
	\Biggl\{\prod_{i=1}^{p}\frac{\partial^{m_i}}{\partial{a^{m_i}_i}}\Biggr\}
		\Biggl\{\prod_{j=1}^{q}\frac{\partial^{n_j}}{\partial{b^{n_j}_j}}\Biggr\}\!
			\Biggl\{\frac{\partial^{k}}{\partial{x^{k}}}\Biggr\}\,
				\Hypergeom{p}{q}{}{a_1,\ldots,a_p}{b_1,\ldots,b_p}{x},	
\end{equation}
where for all $i\in\{0,1,2,\ldots,p\}$, $j\in\{1,2,\ldots,q\}$, the values $m_{i},n_{j},k\in\mathbb{N}$ denote the differentiation orders.
\end{theorem}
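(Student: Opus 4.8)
The plan is to establish the representation by termwise differentiation of the defining power series of the generalized hypergeometric function and to reduce every parameter derivative to the Gamma-function derivatives already put in closed form in \theoremref{Theorem:HighOrderDifferentiationOfGammaFunction}. Recall the series
\begin{equation}
\Hypergeom{p}{q}{}{a_1,\ldots,a_p}{b_1,\ldots,b_q}{x}=\sum_{\ell=0}^{\infty}\frac{\prod_{i=1}^{p}\Pochhammer{a_i}{\ell}}{\prod_{j=1}^{q}\Pochhammer{b_j}{\ell}}\frac{x^{\ell}}{\fact{\ell}},
\end{equation}
which converges absolutely and locally uniformly on compact subsets of its convergence domain; this uniformity is precisely what licenses interchanging each of the three nested differential operators with the summation.

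First I would dispatch the argument derivative, which is the benign one. Since $\left(\partial/\partial{x}\right)^{k}x^{\ell}=\fact{\ell}\,x^{\ell-k}/\fact{(\ell-k)}$ for $\ell\geq{k}$ and vanishes otherwise, the operator $\partial^{k}/\partial{x^{k}}$ merely shifts the summation index and leaves the series in the same hypergeometric shape up to an overall monomial prefactor; it commutes with the parameter operators and can therefore be applied last.

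Next I would treat the parameter derivatives. Writing $\Pochhammer{a_i}{\ell}=\Gamma(a_i+\ell)/\Gamma(a_i)$, the logarithmic-derivative identity $\partial_{a_i}\log\Pochhammer{a_i}{\ell}=\PolyGamma{0}{a_i+\ell}-\PolyGamma{0}{a_i}$ yields $\partial_{a_i}\Pochhammer{a_i}{\ell}=\Pochhammer{a_i}{\ell}\bigl(\PolyGamma{0}{a_i+\ell}-\PolyGamma{0}{a_i}\bigr)$, and iterating through Fa\`a di Bruno's formula expresses $\partial^{m_i}/\partial{a_i^{m_i}}\Pochhammer{a_i}{\ell}$ as a Bell-polynomial combination of the polygamma differences $\PolyGamma{r}{a_i+\ell}-\PolyGamma{r}{a_i}$; the same device applied to $1/\Pochhammer{b_j}{\ell}$ handles the $b_j$-derivatives. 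Alternatively, since each factor is a derivative of a ratio of Gamma functions, \theoremref{Theorem:HighOrderDifferentiationOfGammaFunction} supplies it directly as a Meijer's G function, so the resulting coefficients are computable in the closed forms already available in this section.

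The main obstacle will be bookkeeping rather than hard analysis. The analytic content is confined to the termwise interchange of the operators with the infinite sum, which is justified by the absolute and locally uniform convergence of the series and of its formally differentiated counterpart. The genuinely laborious part is combinatorial: the mixed higher-order parameter derivatives generate Bell-polynomial combinations of polygamma values whose indices must be tracked with care, and one must verify that after resummation the expression still closes into the generalized-hypergeometric family, so that the right-hand side of \eqref{Eq:HigherOrderDerivativeGeneralizedHypergeometricFunction} is indeed computable through the closed forms furnished by \theoremref{Theorem:HighOrderDifferentiationOfGammaFunction} and \theoremref{Theorem:HighOrderDifferentiationOfExponentialFunction}.
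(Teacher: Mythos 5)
Your proposal is sound but does far more work than the paper, which disposes of this ``theorem'' in one line by citing the known special cases [Eqs.\ (07.31.20.0009.01) and (07.31.20.0010.01)] of the Wolfram functions compendium, i.e.\ the tabulated parameter derivatives of $\Hypergeom{p}{q}{}{\cdot}{\cdot}{\cdot}$. In substance the statement is a \emph{definition} --- it introduces the symbol on the left as the mixed derivative on the right --- so nothing beyond well-definedness and computability needs establishing, and the paper treats it accordingly. Your route is genuinely different and more self-contained: you justify termwise differentiation by the locally uniform convergence of the defining series, reduce the parameter derivatives of $\Pochhammer{a_i}{\ell}=\Gamma(a_i+\ell)/\Gamma(a_i)$ to Bell-polynomial combinations of polygamma differences via Fa\`a di Bruno, and tie the resulting coefficients back to the closed forms of \theoremref{Theorem:HighOrderDifferentiationOfGammaFunction}. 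What that buys is an actual derivation and an explanation of \emph{why} the quantity is computable, where the paper offers only a pointer to software documentation; the cost is the combinatorial bookkeeping you yourself flag, none of which the paper needs. One minor slip worth correcting: after reindexing, the $k$-fold derivative in $x$ yields the constant prefactor $\prod_{i}\Pochhammer{a_i}{k}\big/\prod_{j}\Pochhammer{b_j}{k}$ multiplying $\Hypergeom{p}{q}{}{a_1+k,\ldots,a_p+k}{b_1+k,\ldots,b_q+k}{x}$, i.e.\ a Pochhammer ratio rather than ``an overall monomial prefactor''; this does not affect the validity of your argument.
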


\begin{proof}
The proof is obvious using some special cases \cite[Eqs. (07.31.20.0009.01) and (07.31.20.0010.01)]{BibWolfram2010Book}.
\end{proof}

We find that the HOD-GH function (i.e., \eqref{Eq:HigherOrderDerivativeGeneralizedHypergeometricFunction}) is fortunately implemented, but not well-documented, in the core software coding of \Mathematica \cite{BibWolfram2010Book}, and further its traditional form is specifically introduced in \Mathematica as
${}_{p}{F}_{q}{}^{
    \scalemath{0.6}{0.6}{(\{m_1,\ldots,m_p\},\{n_1,\ldots,n_q\},k)}
    }\!\left[{a_1,\ldots,a_p}\,{;}\,{b_1,\ldots,b_q}\,{;}\,{x}\right]$.
Accordingly, both the implementation and the example usage of the HOD-GH function is nicely presented as 
\begin{equation}\label{Eq:MathematicaHODForGeneralizedHypergeometricFunction}
	\begin{minipage}{0.94\textwidth}
		\definecolor{light-gray}{gray}{0.90}
		\lstset{language=Mathematica,basicstyle=\fontsize{7}{8}\selectfont,breaklines=true,morekeywords={HODHypergeometricPFQ},backgroundcolor=\color{light-gray}}
\begin{lstlisting}
 (*Implementation of the HOD-GH function :*)
 HODHypergeometricPFQ[m_, n_, k_][a_, b_, x_] := Derivative[da, db, dx][Function[{u, v, w}, HypergeometricPFQ[u, v, w]]][a, b, x];

 (*Example usage :*)
  In[1]:= HODHypergeometricPFQ[{1, 0}, {1}, 0][{1, 2}, {3}, 0.5]] // N
 Out[1]:= -0.3795244284791705
\end{lstlisting}
	\end{minipage}
\end{equation}
which is numerically efficient and symbolically useful for a wide range of scientific computing and analysis. Consequently, with the aid of all computable tools presented above, we can efficiently compute  \eqref{Eq:HighOrderErgodicCapacityForHighSNRRegime} and so \eqref{Eq:ErgodicCapacityForHighSNRRegime}. Therefore, let us consider some special cases in order to check analytical simplicity and accuracy of both \theoremref{Theorem:HighOrderErgodicCapacityForHighSNRRegime} and \theoremref{Theorem:ErgodicCapacityForHighSNRRegime}.

\subsubsection{HOCC in high-SNR regime of generalized Nakagami-\emph{m} fading environments}
\label{Example:HigherOrderCapacityOverGeneralizedGammaFadingChannel}
In generalized Nakagami-\emph{m} fading environments \cite{BibYilmazAlouiniGLOBECOM2009}, $\gamma_{end}$ follows the PDF given by
\begin{equation}\label{Eq:EqGeneralizedNakagamiMPDF}
p_{\gammasub{end}}\left(\gamma;\gammabarsub{end}\right)=\frac{\xi}{\Gamma\left(m\right)}
	\left(\frac{\beta}{\gammabarsub{end}}\right)^{\xi{m}}
			\gamma^{\xi{m}-1}
				\exp\biggl({\displaystyle-\left(\frac{\beta}{\gammabarsub{end}}\right)^{\xi}\gamma^{\xi}}\biggr),
\end{equation}
where $\gammabar_{end}$ denotes the average power (i.e., $\gammabar_{end}=\mathbb{E}[\gamma_{end}]$) as mentioned before. Further, $m$ and $\xi$ denote the fading figure $(0.5\leq{m})$ and the shaping parameter $(\xi>0)$, respectively; accordingly, $\beta=\Gamma\left(m+1/\xi\right)/\Gamma\left(m\right)$. The special or limiting cases of the generalized Nakagami-\emph{m} distribution are well-known in the literature as the Rayleigh $(m=1,\,\xi=1)$, Half-Normal $(m=1/2,\,\xi=1)$, Nakagami-\emph{m} $(\xi=1)$, Weibull $(m=1)$, lognormal $(m\rightarrow\infty,\,\xi\rightarrow{0})$, and AWGN $(m\rightarrow\infty,\,\xi=1)$. In order to use \theoremref{Theorem:HighOrderErgodicCapacityForHighSNRRegime}, the higher-order AOF $AF_n(\gamma_{end})$ is easily attained by using the $n$th-moment of the generalized gamma distribution obtained in \cite[Eq.~(23) with the special case $N=1$]{BibYilmazAlouiniGLOBECOM2009}, that is 
\begin{equation}\label{Eq:AmountOfFadingForGeneralizedGammaFading}
	AF_{n}(\gamma_{end})=\frac{\Gamma\left(m+{n}/{\xi}\right)}{\Gamma(m)}{\beta}^{-n}-1.
\end{equation}
Next, substituting \eqref{Eq:AmountOfFadingForGeneralizedGammaFading} into \theoremref{Theorem:HighOrderErgodicCapacityForHighSNRRegime} 
and using the binomial differentiation rule \cite[Sec.~0.42]{BibGradshteynRyzhikBook} with 
\eqref{Eq:HigherOrderDifferentiationOfGammaFunctionWithDerivative}, we have 
\begin{equation}\label{Eq:AuxiliaryCoefficientForHighSNRRegimeForGeneralizedGammaFading}
\mu_k=\sum_{j=0}^{k}
	{(-1)}^{j}\Binomial{k}{j}\,
		\Psi_{(k-j)}\left({\textstyle{m},{1}/{\xi},0}\right)\,\log^{j}\left(\beta\right)-\KroneckerDelta{k}{0},
\end{equation}
where $\KroneckerDelta{i}{j}$ is the Kronecker's delta (i.e, $\KroneckerDelta{i}{j}=1$ if $j=j$, and $0$ otherwise) \cite{BibAbramowitzStegunBook}. Accordingly, substituting \eqref{Eq:AuxiliaryCoefficientForHighSNRRegimeForGeneralizedGammaFading} into \eqref{Eq:HighOrderErgodicCapacityForHighSNRRegime}, we asymptotically estimate the HOCC $\widebar{C}(n;\gammabarsub{end})$ in high-SNR regime as
\begin{equation}\label{Eq:HighOrderErgodicCapacityForHighSNRRegimeForGeneralizedGammaFading}
\widebar{C}(n;\gammabarsub{end})\gtrapprox
	\sum_{k=0}^{n}
	\Binomial{n}{k}
	\log^{n-k}\left(\gammabarsub{end}\right)
	~\sum_{j=0}^{k}{(-1)}^{j}\Binomial{k}{j}\,\Psi_{(k-j)}\left({\textstyle{m},{1}/{\xi},0}\right)\,\log^{j}\left(\beta\right),
\end{equation}
where substituting $n=1$ yields 
\begin{equation}\label{Eq:GeneralizedGammaErgodicCapacityForHighSNRRegime}
\widebar{C}(\gammabarsub{end})\gtrapprox\log\left(\gammabarsub{end}\right)-\log\left(\beta\right)+{\PolyGamma{0}{m}}/{\xi},
\end{equation}
which is the ACC in high-SNR regime of generalized gamma fading environment. Note that \eqref{Eq:GeneralizedGammaErgodicCapacityForHighSNRRegime} is a simpler expression as compared to \cite[Eq. (18)]{BibYilmazAlouiniGLOBECOM2011}. Further, setting the fading figure to an integer value (i.e., $m\in\mathbb{Z}^{+}$) and then using \cite[Eq. (6.3.2)]{BibAbramowitzStegunBook}, the asymptotic ACC simplifies to a much simpler expression, i.e., $\widebar{C}(\gammabarsub{end})\gtrapprox\log\left(\gammabarsub{end}\right)-\log\left(\beta\right)+\frac{1}{\xi}\sum_{k=1}^{m-1}\frac{1}{k}-\frac{\EulerGamma}{\xi}$, where $\EulerGamma=0.5772156649...$ is Euler-Mascheroni constant \cite{BibAbramowitzStegunBook}.

For consistency, let us consider some special cases of generalized Nakagami-\emph{m} fading environments. Accordingly, substituting $m=1$ into \eqref{Eq:GeneralizedGammaErgodicCapacityForHighSNRRegime}, we obtain the asymptotic ACC in high-SNR regime of Weibull fading environments, that is
\begin{equation}\label{Eq:WeibullErgodicCapacityForHighSNRRegime}
	\widebar{C}(\gammabarsub{end})\gtrapprox\log\left(\gammabarsub{end}\right)-\log\left(\Gamma(1+{1}/{\xi})\right)-{\EulerGamma}/{\xi}.
\end{equation}
Similarly, substituting $\xi=1$ into \eqref{Eq:GeneralizedGammaErgodicCapacityForHighSNRRegime}, we obtain the asymptotic ACC in Nakagami-\emph{m} fading environments, that is 
\begin{equation}\label{Eq:NakagamiErgodicCapacityForHighSNRRegime}
	\widebar{C}(\gammabarsub{end})\gtrapprox\log\left(\gammabarsub{end}\right)-\log\left(m\right)+\PolyGamma{0}{m}
		\text{ for $m\in\mathbb{R}^{+}$, and~~}
	\widebar{C}(\gammabarsub{end})\gtrapprox\log\left(\gammabarsub{end}\right)-\log\left(m\right)+{\textstyle\sum_{k=1}^{m-1}\frac{1}{k}}-\EulerGamma
		\text{ for $m\in\mathbb{Z}^{+}$.}
\end{equation}
Accordingly, the asymptotic ACC in Rayleigh fading environments ($m=1,\,\xi=1$) is written as $\widebar{C}(\gammabarsub{end})\gtrapprox\log\left(\gammabarsub{end}\right)-\EulerGamma$. The other important special case  is the AWGN $(m\rightarrow\infty,\,\xi=1)$. Then, substituting $m\rightarrow\infty$ and $\xi=1$ in \eqref{Eq:GeneralizedGammaErgodicCapacityForHighSNRRegime} and using the approximation $\lim_{m\rightarrow\infty}\PolyGamma{0}{m}\approx\sum_{k=1}^{m-1}\frac{1}{k}-\EulerGamma\approx\log(m)$ that is obtained with the aid of \cite[Eq.~(4.1.32)]{BibAbramowitzStegunBook}, the ACC of the AWGN channels is obtained for high-SNR regime as $\widebar{C}(\gammabarsub{end})\gtrapprox\log\left(\gammabarsub{end}\right)$ as expected.


\subsubsection{HOCC in high-SNR regime of lognormal fading environments}
\label{Example:HigherOrderCapacityOverLognormalFadingChannel}
In lognormal fading environments, $\gammasub{end}$ follows the PDF given by \cite[Eq. (2.53)]{BibAlouiniBook}
\begin{equation}\label{Eq:EqLognormalPDF}
\!\!p_{\gammasub{end}}\left(\gamma;\gammabarsub{end}\right)=\frac{\kappa}{\sqrt{2\pi}\,\sigma\,\gamma}\exp\biggl(-\frac{{(\kappa\,\log(\gamma)-\mu)}^{2}}{2\,\sigma^2}\biggr),
\end{equation}
where $\kappa=10/\log(10)$, and where $\mu$\,(dB) and $\sigma$\,(dB) denote the mean and the standard deviation of $\gammasub{end}$, respectively. The $n$th order moment of lognormal fading distribution is given by  $\Expected{\gamma_{end}^{n}}=\exp\bigl(\frac{\mu}{\kappa}n\bigr)\exp\bigl(\frac{\sigma^2}{2\kappa^2}n^2\bigr)$ \cite[Eq~(2.55)]{BibAlouiniBook}. With this result, we readily obtain the higher-order AOF $AF_{n}(\gamma_{end})$ as
\begin{equation}\label{Eq:HigherOrderAmountOfFadingForLognormalFading}
AF(n)=\exp\left(-\frac{\sigma^2}{2\kappa^2}n\right)\exp\left(\frac{\sigma^2}{2\kappa^2}\,n^2\right)-1,
\end{equation}
After substituting \eqref{Eq:HigherOrderAmountOfFadingForLognormalFading} into \theoremref{Theorem:HighOrderErgodicCapacityForHighSNRRegime} and then performing some algebraic manipulations utilizing \eqref{Eq:HigherOrderDifferentiationOfGaussianFunctionWithDerivative}, we derive the auxiliary coefficient $\mu_{k}$ as
\begin{equation}\label{Eq:AuxiliaryCoefficientForHighSNRRegimeForLognormalFading}
\!\!\!\mu_k=\sum_{j=0}^{k}{(-1)}^{j}\Binomial{k}{j}\,
\Phi_{(k-j)}\!\left({\textstyle{j\log\left(\frac{\sigma^2}{2\kappa^2}\right)},\frac{\sigma^2}{2\kappa^2},0}\right)-\KroneckerDelta{k}{0}.\!
\end{equation}
Accordingly, the HOCC $\widebar{C}(n;\gammabarsub{end})$ for high-SNR regime can be asymptotically estimated as
\begin{equation}\label{Eq:HighOrderErgodicCapacityForHighSNRRegimeForLognormalFading}
\widebar{C}(n;\gammabarsub{end})\gtrapprox
	\sum_{k=0}^{n}
	\Binomial{n}{k}
	\log^{n-k}\left(\gammabarsub{end}\right)
	\sum_{j=0}^{k}{(-1)}^{j}\Binomial{k}{j}\,\Phi_{(k-j)}\!\left({\textstyle{j\log\left(\frac{\sigma^2}{2\kappa^2}\right)},\frac{\sigma^2}{2\kappa^2},0}\right).
\end{equation}
Eventually, either setting $n=1$ in \eqref{Eq:HighOrderErgodicCapacityForHighSNRRegimeForLognormalFading} or utilizing \theoremref{Theorem:ErgodicCapacityForHighSNRRegime} with \eqref{Eq:HigherOrderAmountOfFadingForLognormalFading}, the ACC is obtained as
\begin{equation}\label{Eq:LognormalErgodicCapacityForHighSNRRegime}
	\widebar{C}(\gammabarsub{end})\gtrapprox\log\left(\gammabarsub{end}\right)-\frac{\sigma^2}{2\kappa^2}.
\end{equation}
Note that, as seen in \eqref{Eq:LognormalErgodicCapacityForHighSNRRegime}, the ACC for high-SNR regime in lognormal fading environments does not seem to depend on the mean parameter $\mu$ but it is hidden in $\gammabar_{end}$. In more details,
$\gammabar_{end}$ strictly depends on both $\mu$ and $\sigma$; so given by $\gammabar_{end}=\exp\bigl(\frac{\mu}{\kappa}\bigr)\exp\bigl(\frac{\sigma^2}{2\kappa^2}\bigr)$\cite[Eq~(2.55)]{BibAlouiniBook}.
Then, substituting this $\gammabar_{end}$ into \eqref{Eq:LognormalErgodicCapacityForHighSNRRegime}, the ACC in high-SNR regime of lognormal fading environments becomes $\widebar{C}(\gammabar_{end})\gtrapprox\frac{\mu}{\kappa}$ which itself not depend on the deviation parameter $\sigma$.

\subsubsection{HOCC in high-SNR regime of extended generalized-K (EGK) fading environments}
\label{Example:HigherOrderCapacityOverEGKFadingChannel} The extended generalized-K (EGK) distribution, which is proposed in \cite{BibYilmazAlouiniISWCS2010,BibYilmazAlouiniEGK2010}, is such a distribution that several fading distributions are either its special or limiting cases such as Rayleigh, lognormal, Weibull, Nakagami-\emph{m}, generalized Nakagami-\emph{m}, generalized-K and the others listed in \cite[Table~I]{BibYilmazAlouiniISWCS2010},\cite[Table~I]{BibYilmazAlouiniEGK2010}. Regarding this disclosed versatility, the EGK distribution offers a unified theory to statistically characterize the envelope statistics of known wireless / optical communication channels. In EGK fading environments, $\gamma_{end}$ follows the PDF given by \cite[Eq.(3)]{BibYilmazAlouiniISWCS2010}, \cite[Eq.(5)]{BibYilmazAlouiniEGK2010}
\begin{equation}\label{Eq:EqInstantaneousSNRPDFForEGKFading}
p_{\gammasub{end}}\left(\gamma;\gammabarsub{end}\right)=
	\frac{\xi}{\Gamma(\msub{s})\Gamma(m)}
		{\left(\frac{\betasub{s}\beta}{\gammabarsub{end}}\right)}^{{m}\xi}
			{\gamma}^{{m}\xi-1}
					\ExtGamma{\msub{s}-{m}\frac{\xi}{\xisub{s}}}{0}{\left(\frac{\betasub{s}\beta}{\gammabarsub{end}}\right)^{m\xi}{\gamma}^{\xi}}{\frac{\xi}{\xisub{s}}},
\end{equation}
where $m~(0.5\leq{m}<\infty)$ and $\xi~(0\leq\xi<\infty)$ denote the fading figure (diversity severity / order) and the fading shaping factor, respectively. Further, $\msub{s}~(0.5\leq\msub{s}<\infty)$ and $\xisub{s}~(0\leq\xi<\infty)$ denote the shadowing severity and the shadowing shaping factor (homogeneity), respectively. Accordingly, $\beta$ and $\betasub{s}$ are defined as $\beta=\Gamma\left(m+1/\xi\right)/\Gamma\left(m\right)$ and $\betasub{s}=\Gamma\left(\msub{s}+1/\xisub{s}\right)/\Gamma\left(\msub{s}\right)$, respectively. 
In \eqref{Eq:EqInstantaneousSNRPDFForEGKFading}, $\ExtGamma{\cdot}{\cdot}{\cdot}{\cdot}$ is the extended incomplete Gamma function defined as $\ExtGamma{\alpha}{x}{b}{\beta}=\int_{x}^{\infty}r^{\alpha-1}\exp\left(-{r}-b{r}^{-\beta}\right)dr$, where $\alpha,\beta,b\in\mathbb{C}$ and $x\in\mathbb{R}^{+}$\cite[Eq. (6.2)]{BibChaudhryZubairBook}. 

\begin{lemma}[Multinomial Differentiation Rule]\label{Lemma:MultinomialDifferentiationRule}
Let $\{F_{n}(x)\}$ be a function set of size $N$, each of which is continuous and the $k$-order differentiable over $x\in\mathbb{C}$. The $k$th derivative of the product of $\prod_{n=1}^{N}F_{n}(x)$ is given by
\small
\begin{equation}
\biggl(\frac{\partial}{\partial{x}}\biggr)^{k}\!\prod_{n=1}^{N}F_{n}(x)=\!\!
	\sum_{j_1+j_2+\ldots+j_N=k}^{k}
		\Multinomial{k}{j_1,j_2,\ldots,j_N}
			\prod_{n=1}^{N}
				\biggl[\biggl(\frac{\partial}{\partial{x}}\biggr)^{j_n}\!F_{j_n}(x)\biggl].
\end{equation}
\end{lemma}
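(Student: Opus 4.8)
The plan is to prove the lemma by induction on $N$, the number of factors in the product, taking the classical binomial Leibniz rule (the case $N=2$) as the driving engine. Because each $F_n$ is assumed continuous and $k$-times differentiable on $\mathbb{C}$, every differentiation performed below is legitimate and every sum involved is finite; consequently no questions of convergence or interchange of limiting operations arise, and the whole argument reduces to careful bookkeeping of indices together with a single factorial identity. For the base case I would take $N=1$: the right-hand side then contains only the term $j_1=k$, whose coefficient $\binom{k}{k}=1$ reproduces $\left(\tfrac{\partial}{\partial x}\right)^{k}F_1(x)$ trivially. (Equivalently one may start from $N=2$, where the assertion is exactly the binomial differentiation rule \cite[Section~0.42]{BibGradshteynRyzhikBook}, with the factor attached to $F_n$ carrying $j_n$ derivatives.)

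For the inductive step I would assume the formula for $N-1$ factors and set $G(x)=\prod_{n=1}^{N-1}F_n(x)$, so that $\prod_{n=1}^{N}F_n = G\,F_N$. First I would apply the binomial Leibniz rule to distribute the $k$th derivative of this two-fold product between $G$ and $F_N$:
\[
\left(\frac{\partial}{\partial x}\right)^{k}\bigl(G\,F_N\bigr)=\sum_{j_N=0}^{k}\binom{k}{j_N}\left(\frac{\partial}{\partial x}\right)^{k-j_N}G\cdot\left(\frac{\partial}{\partial x}\right)^{j_N}F_N.
\]
Next I would expand $\left(\tfrac{\partial}{\partial x}\right)^{k-j_N}G$ by the induction hypothesis, which replaces it by a sum over all $(j_1,\ldots,j_{N-1})$ with $j_1+\cdots+j_{N-1}=k-j_N$, each term weighted by $\binom{k-j_N}{j_1,\ldots,j_{N-1}}$ and carrying the product $\prod_{n=1}^{N-1}\left(\tfrac{\partial}{\partial x}\right)^{j_n}F_n$.

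The crux of the merge is the factorial identity
\[
\binom{k}{j_N}\binom{k-j_N}{j_1,\ldots,j_{N-1}}=\frac{k!}{j_N!\,(k-j_N)!}\cdot\frac{(k-j_N)!}{j_1!\cdots j_{N-1}!}=\frac{k!}{j_1!\cdots j_N!}=\binom{k}{j_1,\ldots,j_N},
\]
which fuses the outer binomial coefficient and the inner multinomial coefficient into one multinomial coefficient. After this substitution, the nested summation — the outer sum over $j_N\in\{0,\ldots,k\}$ and the inner sum over compositions of $k-j_N$ into $N-1$ parts — collapses into a single sum over all compositions $j_1+\cdots+j_N=k$, producing precisely the claimed expression. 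I expect the only genuine obstacle to be this re-indexing: one must verify that, as $j_N$ ranges over $\{0,\ldots,k\}$ and $(j_1,\ldots,j_{N-1})$ ranges over the compositions of $k-j_N$ into $N-1$ nonnegative parts, every composition of $k$ into $N$ nonnegative parts is generated exactly once. Since this correspondence is a bijection, the collapse is exact, the multinomial coefficients match, and the induction closes, which proves \lemmaref{Lemma:MultinomialDifferentiationRule}.
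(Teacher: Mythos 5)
Your proof is correct and follows exactly the route the paper indicates: induction on $N$ driven by the binomial Leibniz rule, with the identity $\binom{k}{j_N}\binom{k-j_N}{j_1,\ldots,j_{N-1}}=\binom{k}{j_1,\ldots,j_N}$ and the bijection between nested and flat compositions closing the step. The paper's own proof is just the one-line remark that the result ``follows by induction from Leibnitz's rule,'' so you have simply supplied the details it omits (and you correctly read the paper's $F_{j_n}$ in the displayed formula as the typo for $F_n$ that it is).
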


\begin{proof}
The proof is obvious which follows by induction from Leibnitz's rule.
\end{proof}
 
Note that the $n$th order moment $\Expected{\gamma_{end}^{n}}$ is given in 
\cite[Eq.~(9)]{BibYilmazAlouiniEGK2010}. Accordingly, we obtain the higher-order AOF $AF_{n}(\gamma_{end})$ as follows
\begin{equation}\label{Eq:AmountOfFadingForEGKFading}
AF_{n}(\gamma_{end})=\frac{\Gamma\left(m+{n}/{\xi}\right)\Gamma\left(m_s+{n}/{\xi_s}\right)}{\Gamma(m)\Gamma(m_s)}\left(\beta\,\beta_s\right)^{-n}-1.
\end{equation}
In order to obtain the auxiliary coefficient $\mu_{k}$, its higher-order differentiation can be obtained by using the multinomial differentiation rule given in the lemma above. Substituting \eqref{Eq:AmountOfFadingForEGKFading} into \theoremref{Theorem:HighOrderErgodicCapacityForHighSNRRegime} and then exploiting \lemmaref{Lemma:MultinomialDifferentiationRule}, we derive the auxiliary coefficient $\mu_{k}$ as
\begin{equation}\label{Eq:AuxiliaryCoefficientForHighSNRRegimeForEGKFading}
\mu_{k}=\!\!\!\!\sum_{i+j+l=k}^{k}
	{(-1)}^{l}\,\Multinomial{k}{i,j,l}\,\Psi_{(i)}\!\left({\textstyle{m},{1}/{\xi},0}\right)
		\Psi_{(j)}\!\left({\textstyle{m_s},{1}/{\xi_s},0}\right)
			\,\log^{l}\!\left(\beta\,\beta_s\right)-\KroneckerDelta{k}{0},
\end{equation}
where the parameters $\beta$ and $\beta_s$ are mentioned before. Accordingly, substituting \eqref{Eq:AuxiliaryCoefficientForHighSNRRegimeForEGKFading} into \eqref{Eq:HighOrderErgodicCapacityForHighSNRRegime}, the HOCC in high-SNR regime can be accurately estimated as
\begin{equation}\label{Eq:HighOrderErgodicCapacityForHighSNRRegimeForEGKFading} 
\widebar{C}(n;\gammabarsub{end})\gtrapprox
	\sum_{k=0}^{n}
	\Binomial{n}{k}
	\log^{n-k}\left(\gammabarsub{end}\right)
	\sum_{i+j+l=k}^{k}
	{(-1)}^{l}\,\Multinomial{k}{i,j,l}\,
		\Psi_{(i)}\!\left({\textstyle{m},{1}/{\xi},0}\right)\,
		\Psi_{(j)}\!\left({\textstyle{m_s},{1}/{\xi_s},0}\right)\,
			\log^{l}\left(\beta\,\beta_s\right).
\end{equation}
Following the same steps in both \secref{Example:HigherOrderCapacityOverGeneralizedGammaFadingChannel} and \secref{Example:HigherOrderCapacityOverLognormalFadingChannel}, we can accurately estimate the ACC $\widebar{C}(\gammabarsub{end})$ in high-SNR regime of the EGK fading environments, that is
\begin{equation}\label{Eq:EGKErgodicCapacityForHighSNRRegime}
	\widebar{C}(\gammabarsub{end})\gtrapprox\log\left(\gammabarsub{end}\right)-\log\left(\beta\beta_s\right)+{\PolyGamma{0}{m}}/{\xi}+{\PolyGamma{0}{m_s}}/{\xi_s}.
\end{equation}
In addition, for the special or limiting cases listed in both \cite[Table~I]{BibYilmazAlouiniISWCS2010} and \cite[Table~I]{BibYilmazAlouiniEGK2010}, the ACC in High-SNR regime can be easily obtained by means of \eqref{Eq:EGKErgodicCapacityForHighSNRRegime}. For instance, for  $\xi=1$ and $\xi_{s}=1$, \eqref{Eq:EGKErgodicCapacityForHighSNRRegime} simply simplifies to
\begin{equation}\label{Eq:GeneralizedKErgodicCapacityForHighSNRRegime}
\widebar{C}(\gammabarsub{end})\gtrapprox\log\left(\gammabarsub{end}\right)-\log\left(m\,m_s\right)+{\PolyGamma{0}{m}}+{\PolyGamma{0}{m_s}},
\end{equation}
which is the ergodic capacity of the generalized-K fading channels for high-SNR regime.

\subsubsection{HOCC in high-SNR regime of \texorpdfstring{$\kappa-\mu$} fading environments}
\label{Example:HigherOrderCapacityOverKappaMuFadingChannel}
In the case of line-of-sight environment, the $\kappa-\mu$ distribution can be used to statistically characterize the small-scale variation of the fading signal. In $\kappa-\mu$ fading environments, the instantaneous SNR $\gamma_{end}$ follows the PDF \cite{BibYacoub2007}
\begin{equation}\label{Eq:EqInstantaneousSNRPDFForKappaMuFading}
p_{\gamma_{end}}\left(\gamma;\gammabarsub{end}\right)=\frac{\mu{(\kappa+1)}^{\frac{\mu+1}{2}}\gamma^{\frac{\mu-1}{2}}}
	{\kappa^{\frac{\mu-1}{2}}\exp\left(\mu\kappa\right)\gammabar_{end}^{\frac{\mu+1}{2}}}
		\exp\biggl(-\frac{\mu(\kappa+1)}{\gammabarsub{end}}\gamma\biggr)\,
			\BesselI[\mu-1]{2\mu\sqrt{\frac{\kappa(\kappa+1)\gamma}{\gammabar_{end}}}},
\end{equation}
where $\kappa$ denotes the ratio of the total power of the dominant component to that of scattered components, and $\mu$ denotes the line-of-sight parameter defined as $\mu=(2\kappa+1)/({(\kappa+1)}^{2}\,AF_{1}(\gamma_{end}))$. Moreover, $\BesselI[n]{\cdot}$ denotes the modified Bessel function of the first kind of the order $n$ \cite[Eq. (9.6.10)]{BibAbramowitzStegunBook}. Evidently, following the same steps in the previous examples and then using the $n$th moment $\Expected{\gamma_{end}^{n}}$ given in \cite[Eq.~(5)]{BibYacoub2007}, we obtain the higher-order AOF $AF_{n}(\gamma_{end})$ as
\begin{equation}\label{Eq:AmountOfFadingForKappaMuFading}
AF(n)=\frac{\Gamma(\mu+n)}{(\kappa+1)^{n}\mu^{n}}
	\Hypergeom{1}{1}{}{-n}{\mu}{-\kappa\mu}-1,
\end{equation}
where $\Hypergeom{1}{1}{}{\cdot}{\cdot}{\cdot}$ denotes the Kummer's confluent hypergeometric function \cite[Eq. (13.1.2)]{BibAbramowitzStegunBook}. After substituting \eqref{Eq:AmountOfFadingForKappaMuFading} into \eqref{Eq:ErgodicCapacityForHighSNRRegime} and utilizing both \lemmaref{Lemma:MultinomialDifferentiationRule} and \theoremref{Theorem:HODGHFunction}, the auxiliary coefficient $\mu_k$ is readily obtained as \begin{equation}\label{Eq:AuxiliaryCoefficientForHighSNRRegimeForKappaMuFading}
\mu_{k}=\!\!\!\!\sum_{i+j+l=k}^{k}
	{(-1)}^{j+l}\Multinomial{k}{i,j,l}\,
		\Psi_{(i)}\!\left({\mu,1,0}\right)\, 
			\log^{j}\left(\kappa\mu+\mu\right)\,
				\DHypergeomSMALL{1}{1}{}{0}{\mu}{-\kappa\mu}{l}{0}{0}-\KroneckerDelta{k}{0}.
\end{equation}
Accordingly, substituting \eqref{Eq:AuxiliaryCoefficientForHighSNRRegimeForKappaMuFading} into \eqref{Eq:HighOrderErgodicCapacityForHighSNRRegime}, the HOCC $\widebar{C}(n;\gammabarsub{end})$ for high-SNR regime can be accurately estimated as
\begin{equation}\label{Eq:HighOrderErgodicCapacityForHighSNRRegimeForKappaMuFading}
\widebar{C}(n;\gammabarsub{end})\fallingdotseq
	\sum_{k=0}^{n}
	\Binomial{n}{k}
	\log^{n-k}\left(\gammabarsub{end}\right)
		\sum_{i+j+l=k}^{k}
			{(-1)}^{j+l}
			\Multinomial{k}{i,j,l}\,
				\Psi_{(i)}\left(\mu,1,0\right)\,
					\log^{j}\left(\kappa\mu+\mu\right)\,
						\DHypergeom{1}{1}{}{0}{\mu}{-\kappa\mu}{l}{0}{0}.
\end{equation}
With the aid of this result, the ACC in high-SNR regime of the $\kappa-\mu$ fading environment is obtained as
\begin{equation}\label{Eq:KappaMuErgodicCapacityForHighSNRRegime}
\widebar{C}(\gammabarsub{end})\gtrapprox\log\left(\gammabarsub{end}\right)-\log\left((\kappa+1)\mu\right)+
		\PolyGamma{0}{\mu}-\Gamma(\mu)\,\DHypergeomSMALL{1}{1}{}{0}{\mu}{-\kappa\mu}{1}{0}{0},
\end{equation}
where $\DHypergeomSMALL{1}{1}{}{0}{\mu}{-\kappa\mu}{1}{0}{0}$ can be given in terms of series expansion of $\Hypergeom{1}{1}{}{a}{b}{x}$, specifically using \cite[Eq. (13.1.2)]{BibAbramowitzStegunBook}. With that representation, it is clear that $\lim\nolimits_{\kappa\rightarrow{0}}\DHypergeomSMALL{1}{1}{}{0}{\mu}{-\kappa\mu}{1}{0}{0}=0$. In the following, substituting $\kappa\rightarrow{0}$ and $\mu=m$ into \eqref{Eq:KappaMuErgodicCapacityForHighSNRRegime} results in \eqref{Eq:NakagamiErgodicCapacityForHighSNRRegime} which is the ergodic capacity for high-SNR regime in Nakagami-\emph{m} fading environments.
\CHECKQEDsymbol

Consequently, the closed-form expressions, presented in this section, demonstrate the usefulness of \theoremref{Theorem:HighOrderErgodicCapacityForHighSNRRegime} in the asymptotic analysis of the HOCC in high-SNR regime of generalized fading environments, and also show that the auxiliary coefficient $\mu_k$ in \eqref{Eq:HighOrderErgodicCapacityForHighSNRRegime} and the parameter $\mu$ in \eqref{Eq:ErgodicCapacityForHighSNRRegime} can be easily derived using simple functions.

\section{HOCC in low-SNR regime of generalized fading environments}\label{Sec:SectionIV}
In this section, we deal with the sharp characterization of the HOCC in low-SNR regime of generalized fading environments. In low-SNR regime (i.e., $\gammabarsub{end}\ll{1}$), we have $1-P_{\gammasub{end}}(\gamma_{th})\approx{0}$ for $\gamma_{th}\gg\gammabar_{end}$, and therefore using \eqref{Eq:HOCC}, we can 
\begin{wrapfigure}{r}{0.51\textwidth}
  \begin{center}
  	\vspace{-8mm}
	\psfrag{XLabel}[c]{\footnotesize $\gammabar_{end}$}
	\psfrag{YLabel}[c]{\footnotesize Objective Function $\sum_{n=1}^{N}w_{n}\Delta^{2}(n+1;\gammabarsub{end})$}
	\includegraphics[width=0.55\textwidth,keepaspectratio=true]{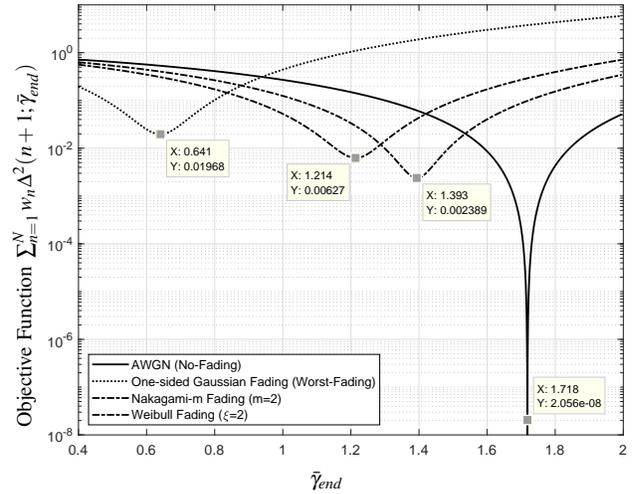}
	\vspace{-6mm}
	\caption{The boundary SNR for different fading environments.}
	\label{Fig:BoundaryTresholdSNRForLowSNRRegime}
	\vspace{-5mm}
  \end{center}
  \vspace{-10mm}
\end{wrapfigure} 
approximate the HOCC $\widebar{C}(n;\gammabarsub{end})$ by means of 
\begin{equation}\label{Eq:ApproximatedHOECapacityForLowSNRRegime}
	\widebar{C}(n;\gammabarsub{end})\gtrapprox\int_{0}^{\gamma_{th}}\log^{n}\left(1+\gamma\right)\,p_{\gamma_{end}}\left(\gamma;\gammabarsub{end}\right)d\gamma,
\end{equation}
where $\gamma_{th}$ is such a average SNR threshold that the low-SNR regime occurs for $\gammabar_{end}<\gamma_{th}$. As the order $n$ increases, $\widebar{C}(n;\gammabarsub{end})$ remains almost the same, as contrary in high-SNR region. This behaviour is described by
\begin{equation}\label{Eq:ChangeInHigherOrderErgodicCapacity}
	\Delta(n;\gammabarsub{end})=\frac{\widebar{C}(n;\gammabarsub{end})}{\widebar{C}(\gammabarsub{end})}-1,
\end{equation}
which is a useful tool to distinguish the high-SNR and low-SNR regimes of the fading environments. Evidently, when $\Delta(n;\gammabarsub{end})\!<\!0$, the fading environment is certainly in low-SNR regime, otherwise, in high-SNR regime. We can ascertain a specific average SNR at which $\Delta(n;\gammabarsub{end})$ approaches zero, particularly unfolding $\gamma_{th}$ between the high-SNR regime and the low-SNR regime. Hence, we indeed need to optimize all objective functions $\{\Delta^{2}(n+1;\gammabarsub{end})\}_{n=1}^{N}$ with respect to $\gammabarsub{end}$, where it is especially chosen as $N=4$ since the first four statistics including mean, variance, skewness, and kurtosis are quite enough for a statistical characterization. Accordingly, referring \emph{multi-objective optimization} (MOO) \cite{BibBurkeKendallBook,BibMarlerArora2004,BibCaramiaDellOlma2008}, a feasible (i.e., Pareto optimal) solution of $\gamma_{th}$ is determined as
\begin{equation}\label{Eq:BoundarySNRValueBetweenHighLowSNRRegimes}
	\gamma_{th}=\underset{\gammabarsub{end}}{\argmin}\,\sum_{n=1}^{N}w_{n}\,\Delta^{2}(n+1;\gammabarsub{end})
			   =\underset{\gammabarsub{end}}{\argmin}\,\sum_{n=1}^{N}w_{n}\,\Bigl(\,\frac{\widebar{C}(n;\gammabarsub{end})}{\widebar{C}(\gammabarsub{end})}-1\Bigr)^{2},
\end{equation}
where the weights $\{w_{n}\}_{n=1}^{N}$ are positive real numbers chosen as $\sum_{n=1}^{N}w_{n}=1$. Referring to Jensen's inequality \cite[Section 12.411]{BibGradshteynRyzhikBook}, the optimization problem apparently suggests that $\mathbb{E}[\log^{n}(1+\gammasub{end})]\approx\mathbb{E}[\log(1+\gammasub{end})]$ for all $n\in\mathbb{N}$ when $\gammabarsub{end}=\gamma_{th}$ for any type of fading environment. The objective function has been depicted in \figref{Fig:BoundaryTresholdSNRForLowSNRRegime} for different fading environments. and the data-tips in \figref{Fig:BoundaryTresholdSNRForLowSNRRegime} provide a convenient way to view information about the Pareto optimal values of $\gamma_{th}$ for the considered fading environments. As it is apparently observed, $\gamma_{th}$ is distinctive for different fading environments. Frankly speaking, $\gamma_{th}$ decreases when the channel quality gets worse (i.e., the diversity order of the channel decreases) or it increases when the channel quality gets better. When there is no fading (i.e., when the corresponding wireless environment is Gaussian), $\gamma_{th}$ reaches its supremum-limit value. Conversely, when the quality of the fading environment gets the worst, $\gamma_{th}$ reaches to its infinum-limit value. 
\begin{theorem}[Supremum-Limit Value of $\gamma_{th}$]\label{Theorem:SupremumLimitSNRValueThatLowSNRRegimeStarts}
Any fadingless environment leads in low-SNR regime when  
\begin{eqnarray}\label{Eq:SupremumLimitSNRValueThatLowSNRRegimeStarts}
	\gammabar_{end}\lessapprox\gamma_{th}^{+}=1.71828182846~~(2.35094397275~\text{\rm{dB}}).
\end{eqnarray}
\end{theorem}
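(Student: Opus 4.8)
The plan is to evaluate the boundary-SNR optimization~\eqref{Eq:BoundarySNRValueBetweenHighLowSNRRegimes} in the fadingless (AWGN) limit, in which the multi-objective problem degenerates so strongly that its Pareto-optimal solution can be read off in closed form. First I would observe that a fadingless environment carries no random fluctuation of the signal strength, so the instantaneous SNR is frozen at its average, $\gamma_{end}=\gammabar_{end}$, and the PDF reduces to a Dirac mass, $p_{\gamma_{end}}(\gamma;\gammabar_{end})=\DiracDelta{\gamma-\gammabar_{end}}$. Substituting this into the definition~\eqref{Eq:HOCC} of the HOCC and invoking the sifting property of the delta function collapses the integral, giving the closed form $\widebar{C}(n;\gammabar_{end})=\log^{n}(1+\gammabar_{end})$ for every order and, in particular, $\widebar{C}(\gammabar_{end})\equiv\widebar{C}(1;\gammabar_{end})=\log(1+\gammabar_{end})$.

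Next I would feed these two expressions into the relative-deviation functional~\eqref{Eq:ChangeInHigherOrderErgodicCapacity}. For the fadingless channel this produces, for a generic order $m$,
\begin{equation}
\Delta(m;\gammabar_{end})=\frac{\log^{m}(1+\gammabar_{end})}{\log(1+\gammabar_{end})}-1=\log^{m-1}(1+\gammabar_{end})-1,
\end{equation}
so that the objective terms entering~\eqref{Eq:BoundarySNRValueBetweenHighLowSNRRegimes} become $\Delta(n+1;\gammabar_{end})=\log^{n}(1+\gammabar_{end})-1$ for $n=1,\ldots,N$.

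Finally I would exploit the decisive structural feature of the fadingless case: all $N$ of these deviation terms share the single common zero $\log(1+\gammabar_{end})=1$, because $\log^{n}(1+\gammabar_{end})=1$ holds simultaneously for every $n$ exactly when $\log(1+\gammabar_{end})=1$. Hence the weighted sum of squares $\sum_{n=1}^{N}w_{n}\,\Delta^{2}(n+1;\gammabar_{end})$, being nonnegative with strictly positive weights $\{w_{n}\}$, attains its global minimum value of zero, and does so at the unique root of $\log(1+\gammabar_{end})=1$ (uniqueness being forced already by the linear $n=1$ term). That root is $\gammabar_{end}=e-1=1.71828182846\ldots$, independent of the particular weights, which in decibels is $10\log_{10}(e-1)=2.35094397275\ldots$\,dB, exactly the value $\gamma_{th}^{+}$ asserted in~\eqref{Eq:SupremumLimitSNRValueThatLowSNRRegimeStarts}. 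The only step demanding care --- and the reason this case furnishes the \emph{supremum} limit --- is precisely this simultaneous vanishing: for a genuine fading PDF the several $\Delta(n+1;\gammabar_{end})$ cannot in general be annihilated at one SNR, so the minimizing $\gamma_{th}$ is strictly smaller and weight-dependent, whereas in the fadingless limit it is pinned exactly at $e-1$.
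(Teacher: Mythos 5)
Your proposal is correct and follows essentially the same route as the paper: replace the PDF by the Dirac mass $\delta(\gamma-\bar{\gamma}_{end})$ in the fadingless limit, collapse the HOCC to $\log^{n}(1+\bar{\gamma}_{end})$, and observe that the boundary-SNR optimization reduces to the simultaneous conditions $\log^{n}(1+\gamma_{th}^{+})=\log(1+\gamma_{th}^{+})$, whose unique nontrivial joint root is $\gamma_{th}^{+}=e-1=1.71828182846$. Your version is, if anything, slightly more explicit than the paper's in identifying the root as $e-1$ and in justifying uniqueness via the $n=1$ term.
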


\begin{proof}
When the quality of the fading environment significantly gets better, we have $p_{\gammasub{end}}(\gamma;\gammabarsub{end})\approx\DiracDelta{\gamma-\gammabarsub{end}}$, where $\DiracDelta{\cdot}$ denotes the Dirac's delta function \cite[Eq.\!~(1.8.1)]{BibZwillingerBook}. Referring to \eqref{Eq:HOCC} and then using \cite[Eq.\!~(1.8.1/1)]{BibZwillingerBook}, we have $\widebar{C}(n;\gammabarsub{end})\approx\log^{n}(1+\gammabarsub{end})$, which convert \eqref{Eq:BoundarySNRValueBetweenHighLowSNRRegimes} to a joint solution of a set of equations $\log^{n}(1+\gamma^{+}_{th})=\log(1+\gamma^{+}_{th})$ for all $n\in\mathbb{Z}^{+}$ and ${n}\neq{1}$. For the first four statistics, this equation set has obviously one unique solution which is obtained as $\gamma_{th}^{+}=1.71828182846$ as shown in \eqref{Eq:SupremumLimitSNRValueThatLowSNRRegimeStarts} and depicted in \figref{Fig:BoundaryTresholdSNRForLowSNRRegime}, which proves \theoremref{Theorem:SupremumLimitSNRValueThatLowSNRRegimeStarts}.
\end{proof}

\begin{theorem}[Infinum-Limit Value of $\gamma_{th}$]\label{Theorem:InfinumLimitSNRValueThatLowSNRRegimeStarts}
Any worst-fading environment leads in low-SNR regime when
\begin{eqnarray}\label{Eq:InfinumLimitSNRValueThatLowSNRRegimeStarts}
	\gammabar_{end}\lessapprox\gamma_{th}^{-}=0.64117587677~~(-1.93022825657~\text{\rm{dB}}).
\end{eqnarray}
\end{theorem}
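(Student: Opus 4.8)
The plan is to mirror the proof of \theoremref{Theorem:SupremumLimitSNRValueThatLowSNRRegimeStarts}, but to replace the \emph{fadingless} statistics there by those of the \emph{worst-case} fading environment. As recalled in the proof of \theoremref{Theorem:FadingEnvironmentInHighSNRRegime}, the worst-case fading amplitude is characterized by a one-sided Gaussian fading, whose SNR obeys the CDF $P_{\gammasub{end}}(\gamma;\gammabarsub{end})=\erf(\sqrt{0.5\gamma/\gammabarsub{end}})$. First I would differentiate this CDF to recover the corresponding half-normal SNR PDF $p_{\gammasub{end}}(\gamma;\gammabarsub{end})=(2\pi\gammabarsub{end}\gamma)^{-1/2}\exp(-\gamma/(2\gammabarsub{end}))$, which is exactly the special case $m=1/2$, $\xi=1$ of the generalized Nakagami-$m$ PDF \eqref{Eq:EqGeneralizedNakagamiMPDF}; this fixes the statistics to those of the most severe fading and so is expected to furnish the infimum of $\gamma_{th}$.

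Next I would substitute this PDF into the HOCC definition \eqref{Eq:HOCC} to obtain $\widebar{C}(n;\gammabarsub{end})=\int_{0}^{\infty}\log^{n}(1+\gamma)\,p_{\gammasub{end}}(\gamma;\gammabarsub{end})\,d\gamma$ for each relevant order together with the ACC $\widebar{C}(\gammabarsub{end})=\widebar{C}(1;\gammabarsub{end})$. With these quantities in hand I would form the ratio $\Delta(n+1;\gammabarsub{end})=\widebar{C}(n+1;\gammabarsub{end})/\widebar{C}(\gammabarsub{end})-1$ exactly as in \eqref{Eq:ChangeInHigherOrderErgodicCapacity}, and assemble the multi-objective objective $\sum_{n=1}^{N}w_{n}\Delta^{2}(n+1;\gammabarsub{end})$ of \eqref{Eq:BoundarySNRValueBetweenHighLowSNRRegimes} with $N=4$ (covering mean, variance, skewness, and kurtosis) and the normalized weights $\sum_{n=1}^{N}w_{n}=1$.

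I would then locate the Pareto-optimal minimizer $\gamma_{th}^{-}=\underset{\gammabarsub{end}}{\argmin}\,\sum_{n=1}^{N}w_{n}\Delta^{2}(n+1;\gammabarsub{end})$. The key structural contrast with \theoremref{Theorem:SupremumLimitSNRValueThatLowSNRRegimeStarts} is that in the fadingless case the equations $\log^{n}(1+\gamma_{th}^{+})=\log(1+\gamma_{th}^{+})$ share the exact common root $\gamma_{th}^{+}=e-1=1.71828182846$, whereas here the individual conditions $\widebar{C}(n+1;\gammabarsub{end})=\widebar{C}(\gammabarsub{end})$ possess no common closed-form root; the worst-case HOCCs spread apart with $n$, so the minimizer is a genuine numerical compromise across the four statistics. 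Reading it off the bottom of the objective curve in \figref{Fig:BoundaryTresholdSNRForLowSNRRegime} yields $\gamma_{th}^{-}=0.64117587677$, consistent with the ordering $\gamma_{th}^{-}<\gamma_{th}^{+}$ expected since worse fading pushes the low-SNR boundary downward, which proves \theoremref{Theorem:InfinumLimitSNRValueThatLowSNRRegimeStarts}.

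I expect the main obstacle to be the stable evaluation of the half-normal HOCC integrals $\widebar{C}(n+1;\gammabarsub{end})$ that feed the optimizer: the $\log^{n}(1+\gamma)$ kernel combined with the $\gamma^{-1/2}$ singularity at the origin makes these integrals delicate and slowly convergent. To control them I would either specialize the closed-form asymptotic machinery of \theoremref{Theorem:HighOrderErgodicCapacityForHighSNRRegime} to $(m,\xi)=(1/2,1)$, or express the integrand through a Meijer's $G$ / Mellin--Barnes representation so that each $\widebar{C}(n+1;\gammabarsub{end})$ is available in a form amenable to the one-dimensional numerical minimization that produces $\gamma_{th}^{-}$.
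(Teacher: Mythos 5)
Your proposal follows essentially the same route as the paper: the worst case is modeled by the one-sided Gaussian (half-normal) SNR PDF, the resulting HOCCs are fed into the multi-objective criterion of \eqref{Eq:BoundarySNRValueBetweenHighLowSNRRegimes}, and $\gamma_{th}^{-}$ is obtained as the numerically computed Pareto-optimal minimizer. The paper evaluates the half-normal HOCC integrals exactly via a generalized Fox's H (Fox's Y) representation --- i.e.\ precisely the Mellin--Barnes route you name --- rather than the high-SNR asymptotic machinery you offer as an alternative, which would not be appropriate near $\gammabar_{end}\approx 0.64$.
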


\begin{proof}
Note that as the fading conditions get significantly worse, the diversity gain $m\in\mathbb{R}_{+}$ approaches to $m\rightarrow{1}/{2}$. In addition to this case, the fading conditions also gets worse when shape parameter (i.e., non-linearity) $\xi$ decreases (i.e., when $\xi\!\rightarrow\!{0}$). To the best of our knowledge, there is no physical justification in literature for such an extreme case that both the diversity order $m\rightarrow1/2$ and the shape parameter $\xi\rightarrow{0}$ simultaneously occur, specifically meaning that, for the worst-fading environments, the shape parameter $\xi$ cannot be smaller than one ($\xi>1$) and the diversity order $m$ cannot be smaller than half ($m\geq{1}/{2}$) as mentioned in \cite{BibAlouiniBook,BibNakagami1960}. Also explicitly mentioned in \cite{BibAlouiniBook}, the worst-case fading conditions are accurately characterized by a one-sided Gaussian distribution that follows the PDF
\begin{equation}\label{Eq:OneSidedGaussianPowerPDF}
	p_{\gamma_{end}}(\gamma;\gammabarsub{end})=\frac{\exp\left(-\frac{1}{2}{\gamma}/{\gammabarsub{end}}\right)}{\sqrt{2\pi\gammabarsub{end}\gamma}},
		\quad\text{for $\gamma\in[0,\infty)$},
\end{equation}
whose substitution into \eqref{Eq:HOCC} yields the HOCC $\widebar{C}(n;\gammabarsub{end})$ of one-sided Gaussian fading environments, that is 
\begin{equation}\label{Eq:HOECapacityForOneSidedGaussianFading}
\!\!\!\widebar{C}(n;\gammabarsub{end})=
	\fact{n}\,\FoxY[right]{n+1,n+2}{n+2,n+1}
			{2\gammabarsub{end}}
				{\left(1,1,0.5/\gammabar_{end},0.5\right),(1,1,0,1),\ldots,(1,1,0,1)}
					{(0,1,0,1),\ldots,(0,1,0,1)},\!\!\!
\end{equation}
where $\FoxYDefinition{m,n}{p,q}{\cdot}$ denotes the generalized version of Fox's H function, defined in \cite{BibYilmazAlouiniICT2010} by Yilmaz \& Alouini. Eventually, substituting \eqref{Eq:HOECapacityForOneSidedGaussianFading} into \eqref{Eq:BoundarySNRValueBetweenHighLowSNRRegimes}, we can find $\gamma_{th}^{-}$ (i.e., the infinum-limit value of $\gamma_{th}$). Although the resultant optimization has no analytical solution, the Pareto optimal solution can always be easily obtained by means of powerful optimization tools available in the mathematical software packages such as \Mathematica, \Maple and \Matlab. After accurately achieving optimization, the Pareto optimal $\gamma_{th}^{-}$ is obtained as \eqref{Eq:InfinumLimitSNRValueThatLowSNRRegimeStarts} and depicted in \figref{Fig:BoundaryTresholdSNRForLowSNRRegime}, which proves \theoremref{Theorem:InfinumLimitSNRValueThatLowSNRRegimeStarts}.
\end{proof}

As a consequence, with the aid of both \theoremref{Theorem:SupremumLimitSNRValueThatLowSNRRegimeStarts} and \theoremref{Theorem:InfinumLimitSNRValueThatLowSNRRegimeStarts}, we know that the SNR $\gamma_{th}$ at which the low-SNR regime starts will be $0.64117587677<\gamma_{th}<1.71828182846$ for any kind of fading environments. Consequently, using the results presented above, we can deal with both the HOCC and ACC in low-SNR regime of fading environments, especially by asymptotically tight upper-bounds as shown in the following.

\begin{theorem}\label{Theorem:HighOrderErgodicCapacityForLowSNRRegime}
An asymptotically tight upper-bound of $\widebar{C}(n;\gammabarsub{end})$ in low-SNR regime of fading environments is given by
\begin{equation}\label{Eq:HighOrderErgodicCapacityForLowSNRRegime}
	\widebar{C}(n;\gammabarsub{end})\lessapprox\hat{\mu}_{n}\,\gammabar_{end}^{n},\text{~~where~~}\hat{\mu}_{n}=AF_{n}(\gamma_{end})+1.
\end{equation}
\end{theorem}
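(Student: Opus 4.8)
The plan is to recognize that the right-hand side of \eqref{Eq:HighOrderErgodicCapacityForLowSNRRegime} is nothing other than the $n$th raw moment of the instantaneous SNR, and then to control the HOCC integrand by the elementary logarithmic inequality $\log(1+\gamma)\le\gamma$. Concretely, from the definition of the higher-order AOF in \eqref{Eq:HigherOrderAmountOfFading} we have $\hat{\mu}_{n}=AF_{n}(\gammabarsub{end})+1=\Expected{\gamma^{n}_{end}}/\Expected{\gamma_{end}}^{n}$, and since $\gammabarsub{end}=\Expected{\gamma_{end}}$ this yields the key identity $\hat{\mu}_{n}\,\gammabar_{end}^{n}=\Expected{\gamma^{n}_{end}}$. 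Hence the assertion reduces to showing that the HOCC never exceeds this moment, i.e. $\widebar{C}(n;\gammabarsub{end})\le\Expected{\gamma^{n}_{end}}$, with equality approached as $\gammabarsub{end}\to0$.

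First I would establish the pointwise estimate. For $\gamma\ge0$ the map $\gamma\mapsto\gamma-\log(1+\gamma)$ vanishes at the origin and has non-negative derivative $\gamma/(1+\gamma)$, so $0\le\log(1+\gamma)\le\gamma$ on $[0,\infty)$; raising two non-negative numbers to the integer power $n$ preserves the order, giving $\log^{n}(1+\gamma)\le\gamma^{n}$ for every $\gamma\ge0$ and $n\in\mathbb{N}$. Integrating this against the non-negative density and invoking the defining integral \eqref{Eq:HOCC},
\[
\widebar{C}(n;\gammabarsub{end})=\int_{0}^{\infty}\log^{n}(1+\gamma)\,p_{\gammasub{end}}\!\left(\gamma;\gammabarsub{end}\right)d\gamma\le\int_{0}^{\infty}\gamma^{n}\,p_{\gammasub{end}}\!\left(\gamma;\gammabarsub{end}\right)d\gamma=\Expected{\gamma^{n}_{end}}=\hat{\mu}_{n}\,\gammabar_{end}^{n},
\]
which is precisely the upper bound claimed in \eqref{Eq:HighOrderErgodicCapacityForLowSNRRegime}.

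It then remains to justify the symbol $\lessapprox$, i.e. that the bound is asymptotically tight in the low-SNR regime $\gammabarsub{end}\ll1$. Here I would sharpen the pointwise estimate: since $\gamma-\log(1+\gamma)\le\gamma^{2}/2$ for all $\gamma\ge0$ and $a^{n}-b^{n}\le n\,a^{n-1}(a-b)$ whenever $0\le b\le a$, one obtains the two-sided sandwich $0\le\gamma^{n}-\log^{n}(1+\gamma)\le\tfrac{n}{2}\,\gamma^{n+1}$. Integrating gives $0\le\Expected{\gamma^{n}_{end}}-\widebar{C}(n;\gammabarsub{end})\le\tfrac{n}{2}\Expected{\gamma^{n+1}_{end}}=\tfrac{n}{2}\,\hat{\mu}_{n+1}\gammabar_{end}^{n+1}$, so that the relative gap is bounded by $\tfrac{n}{2}(\hat{\mu}_{n+1}/\hat{\mu}_{n})\,\gammabarsub{end}$, which vanishes as $\gammabarsub{end}\to0$. (Alternatively, \eqref{Eq:ApproximatedHOECapacityForLowSNRRegime} lets one restrict the integration to $[0,\gamma_{th}]$, where $\gamma_{th}$ is bounded above by \theoremref{Theorem:SupremumLimitSNRValueThatLowSNRRegimeStarts}, and the same concentration argument applies.)

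The calculations are entirely elementary, so I do not anticipate a serious obstacle; the only point deserving care is the tightness claim, where one must check that the moment ratio $\hat{\mu}_{n+1}/\hat{\mu}_{n}$ stays finite so that the relative error genuinely tends to zero. This is guaranteed because each $\hat{\mu}_{k}=AF_{k}(\gammabarsub{end})+1$ is an explicit function of the fading figures alone (as the worked examples in \secref{Example:HigherOrderCapacityOverGeneralizedGammaFadingChannel}--\secref{Example:HigherOrderCapacityOverKappaMuFadingChannel} make clear) and is independent of $\gammabarsub{end}$, which completes the plan.
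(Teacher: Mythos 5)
Your proposal is correct and follows essentially the same route as the paper: both rest on the expansion $\log^{n}(1+\gamma)=\gamma^{n}+O(\gamma^{n+1})$ near the origin together with the identity $\mathbb{E}[\gamma_{end}^{n}]=(AF_{n}(\gamma_{end})+1)\,\bar{\gamma}_{end}^{n}$ coming from the definition of the higher-order AOF. The paper's proof merely cites the Maclaurin series of $\log(1+x)$ and the concentration of $\gamma_{end}$ below $\gamma_{th}^{-}$, whereas you supply the details it omits --- the global pointwise bound $\log^{n}(1+\gamma)\le\gamma^{n}$ (so the upper bound in fact holds at every SNR, not only asymptotically) and the explicit estimate $0\le\gamma^{n}-\log^{n}(1+\gamma)\le\tfrac{n}{2}\gamma^{n+1}$ yielding a relative gap of order $\bar{\gamma}_{end}$ --- which makes the tightness claim rigorous.
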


\begin{proof}
According to \theoremref{Theorem:InfinumLimitSNRValueThatLowSNRRegimeStarts}, we know that $\gammasub{end}$ will follow the distribution most possibly taking random values much smaller than $\gamma_{th}^{-}=0.64117587677$. Using \cite[Eq. (1.511)]{BibGradshteynRyzhikBook} with this fact, we can show that $C\left(n;\gamma_{end}\right)$ is upper-bounded as $C\left(n;\gamma_{end}\right)\risingdotseq\gamma_{end}^{n}+O\left(\gamma_{end}^{n+1}\right)$. Accordingly, utilizing \eqref{Eq:HigherOrderAmountOfFading}, the HOCC $\widebar{C}(n;\gammabarsub{end})$ can be compactly approximated as in \eqref{Eq:HighOrderErgodicCapacityForLowSNRRegime}, which proves \theoremref{Theorem:HighOrderErgodicCapacityForLowSNRRegime}.
\end{proof} 

\begin{theorem}\label{Theorem:ErgodicCapacityForLowSNRRegime}
An asymptotically tight upper-bound of $\widebar{C}(\gammabarsub{end})$ in low-SNR regime of fading environments is given by
\begin{equation}\label{Eq:ErgodicCapacityForLowSNRRegime}
	\widebar{C}(\gammabarsub{end})\lessapprox\mathbb{E}[\gamma_{end}]=\gammabarsub{end}.
\end{equation}
\end{theorem}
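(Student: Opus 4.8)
The plan is to derive this bound as an immediate corollary of \theoremref{Theorem:HighOrderErgodicCapacityForLowSNRRegime}, exactly mirroring how \theoremref{Theorem:ErgodicCapacityForHighSNRRegime} was obtained from \theoremref{Theorem:HighOrderErgodicCapacityForHighSNRRegime} in the high-SNR setting. First I would invoke the defining identity $\widebar{C}(\gammabarsub{end})\trigeq\widebar{C}(1;\gammabarsub{end})$, which reduces the claim to evaluating the general HOCC upper-bound at the single order $n=1$.

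Next I would apply \theoremref{Theorem:HighOrderErgodicCapacityForLowSNRRegime}, which gives $\widebar{C}(n;\gammabarsub{end})\lessapprox\hat{\mu}_{n}\,\gammabar_{end}^{n}$ with $\hat{\mu}_{n}=AF_{n}(\gamma_{end})+1$. Specializing to $n=1$ yields $\widebar{C}(\gammabarsub{end})\lessapprox\hat{\mu}_{1}\,\gammabar_{end}$, so the entire argument reduces to computing the single coefficient $\hat{\mu}_{1}$.

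The key step is to note that the first-order amount of fading vanishes identically for every fading law. Substituting $n=1$ into the definition \eqref{Eq:HigherOrderAmountOfFading} gives $AF_{1}(\gammabar_{end})=\mathbb{E}[\gamma_{end}]/\mathbb{E}[\gamma_{end}]-1=0$, whence $\hat{\mu}_{1}=AF_{1}(\gamma_{end})+1=1$. The bound then collapses to $\widebar{C}(\gammabarsub{end})\lessapprox\gammabar_{end}=\mathbb{E}[\gamma_{end}]$, which is precisely \eqref{Eq:ErgodicCapacityForLowSNRRegime}.

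I do not anticipate any genuine obstacle here, since the statement is a one-line corollary whose only substantive content is the vanishing of $AF_{1}$; the asymptotic tightness of the upper-bound is inherited directly from \theoremref{Theorem:HighOrderErgodicCapacityForLowSNRRegime}, and no new approximation is introduced by the specialization to $n=1$. The only point worth a sanity check is that the asymptotic relation $\lessapprox$ is preserved under this specialization, which it is because the parent theorem already holds for arbitrary $n\in\mathbb{N}$.
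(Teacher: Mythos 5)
Your proposal is correct and follows essentially the same route as the paper, which likewise obtains the result by setting $n=1$ in \eqref{Eq:HighOrderErgodicCapacityForLowSNRRegime}; you merely make explicit the (true and worthwhile) observation that $AF_{1}(\gammabar_{end})=0$ forces $\hat{\mu}_{1}=1$. No gap.
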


\begin{proof}
The proof is obvious by setting $n=1$ in \eqref{Eq:HighOrderErgodicCapacityForLowSNRRegime}. 
\end{proof}

Note that the ACC $\widebar{C}(\gammabarsub{end})$ in low-SNR regime does not depend on the fading conditions except for $\gammabarsub{end}$. Since $\widebar{C}(\gammabarsub{end})$ cannot be greater than $\gammabarsub{end}$ in low-SNR regime, the main challenge is to improve the energy efficiency rather than increasing the diversity. In respect to the accuracy and analytical simplicity of \theoremref{Theorem:HighOrderErgodicCapacityForLowSNRRegime} and \theoremref{Theorem:ErgodicCapacityForLowSNRRegime}, let us consider some special cases for single link reception over generalized fading channels. For the fading conditions in a non-line-of-sight environment, the $\eta-\mu$ distribution is commonly used and it follows the PDF given by \cite[Eq. (18)]{BibYacoub2007}
\begin{equation}\label{Eq:EtaMuPowerPDF}
p_{\gamma_{end}}(\gamma)=\frac{2\sqrt{\pi}h^{\mu}}{\Gamma(\mu)}
	\biggl(\frac{\mu}{\gammabarsub{end}}\biggr)^{\mu+\frac{1}{2}}
		\biggl(\frac{\gamma}{H}\biggr)^{\mu-\frac{1}{2}}
			\exp\left(-\frac{2\mu{h}}{\gammabarsub{end}}\gamma\right)
				\BesselI[\mu-\frac{1}{2}]{\frac{2\mu{H}}{\gammabarsub{end}}\gamma},
\end{equation}
where $\mu$ represents the channel non-line-of-sight severity, and  $\eta$ denotes i) the ratio between the powers of the in-phase and quadrature components of the complex fading distribution(i.e., it is the first format by $0<\eta<\infty$), or ii) denotes the correlation between the inphase and quadrature components (i.e., it is the second format by $-1<\eta<1$). For these two formats, $h$ and $H$ are functions of $\eta$ and varies from one format to another \cite{BibYacoub2007} (i.e., $h=(2+\eta^{-1}+\eta)/4$ and $H=(\eta^{-1}-\eta)/4$ for the first format, whereas $h=1/(1-\eta^{2})$ and $H=\eta/(1-\eta^{2})$ for the second format). Note that, using the $n$th moment $\Expected{\gamma_{end}^{n}}$ \cite[Eq. (21)]{BibYacoub2007}, the HOCC $\widebar{C}(n;\gammabarsub{end})$ in low-SNR regime is obtained as
\begin{equation}\label{Eq:EtaMuHighOrderErgodicCapacityForLowSNRRegime}
\widebar{C}(n;\gammabarsub{end})\lessapprox
\frac{\Gamma(2\mu+n)}{h^{\mu+n}{(2\mu)}^{n}\Gamma(2\mu)}
	\Hypergeom{2}{1}{}{\mu+\frac{n}{2}+\frac{1}{2},\mu+\frac{n}{2}}{\mu+\frac{1}{2}}{{\left(\frac{H}{h}\right)}^2}
		\gammabar_{end}^{n},
\end{equation}
where substituting $n=1$ and then using \cite[Eq. (7.3.1/27)]{BibPrudnikovBookVol3} results in the ACC $\widebar{C}(\gammabarsub{end})\lessapprox\gammabarsub{end}$ in low-SNR regime.

The other example is the HOCC $\widebar{C}(n;\gammabarsub{end})$ of the most versatile fading environment, known as EGK fading distribution, whose PDF $p_{\gamma_{end}}(\gamma)$ is given in \eqref{Eq:EqInstantaneousSNRPDFForEGKFading} and its higher order AOF $AF_{n}(\gamma_{end})$ in \eqref{Eq:AmountOfFadingForEGKFading}. The HOCC $\widebar{C}(n;\gammabarsub{end})$ of the EGK fading environments is obtained as
\begin{equation}\label{Eq:EGKHighOrderErgodicCapacityForLowSNRRegime}
\!\!\!\widebar{C}(n;\gammabarsub{end})\lessapprox
	\frac{\Gamma\left(m+{n}/{\xi}\right)\Gamma\left(m_s+{n}/{\xi_s}\right)}
		{\Gamma(m)\Gamma(m_s)}
			\left(\beta\,\beta_s\right)^{-n}\gammabar_{end}^{n}
\end{equation}
for low-SNR regime. Furthermore, for the other fading environments, for instance using the higher order AOFs given in \eqref{Eq:AmountOfFadingForGeneralizedGammaFading}, \eqref{Eq:HigherOrderAmountOfFadingForLognormalFading} and \eqref{Eq:AmountOfFadingForKappaMuFading}, the HOCC in low-SNR region can be easily obtained.     

\section{Conclusion}\label{Sec:Conclusion}
In this article, we determine the boundary average SNRs between the high-SNR and low-SNR regimes in generalized fading environments. For these two regimes, we propose some novel closed-form expressions for the asymptotic analysis of both the HOCC and ACC in generalized fading environments, and show asymptotically tight and closed-form expressions, each of which is checked either by numerical numerical or simulation-based methods, for different fading environments commonly used in the literature. The methodological soundness of our asymptotic analysis clearly evidences that our closed-form expressions are insightful enough to comprehend the diversity gains, and their analytical accuracy and simplicity will be beneficial for the researchers, practitioners, and theoreticians in field of wireless communications.

\section*{Acknowledgment}
This work was supported by Y{\i}ld{\i}z Technical University (YTU).

\begin{reference}

\bibitem{BibYilmazAlouiniUnifiedPerformanceTCOM2012} Yilmaz F,  Alouini M-S. A novel unified expression for the capacity and bit error probability of wireless communication systems over generalized fading channels. IEEE T Commun 2012; 60: 1862-1876.

\bibitem{BibYilmazAlouiniTCOM2012} Yilmaz F,  Alouini M-S. A unified MGF-based capacity analysis of diversity combiners over generalized fading channels. IEEE T Commun 2012; 60: 862–875.

\bibitem{BibYilmazAlouiniPIMRC2010} Yilmaz F,  Alouini M-S. An MGF-based capacity analysis of equal gain combining over fading channels. In: IEEE International Symposium on Personal Indoor and Mobile Radio Communications (PIMRC 2010). Istanbul, Turkey: IEEE; 2010. pp. 945–950.

\bibitem{BibYilmazAlouiniICC2012} Yilmaz F,  Alouini M-S. A novel ergodic capacity analysis of diversity combining and multihop transmission systems over generalized composite fading channels. In: IEEE International Conference on Communications (ICC 2012). Ottowa, Canada: IEEE; 2012. pp, 1-4.

\bibitem{BibKhairiAshourHamdi2008} Hamdi KA, Capacity of MRC on correlated Rician fading channels. AEU-Int J Electron C 2008; 56: 708–711.

\bibitem{BibYilmazAlouiniISWCS2010} Yilmaz F,  Alouini M-S. A new simple model for composite fading channels: Second order statistics and channel capacity. In: IEEE International Symposium on Wireless Communication Systems (ISWCS 2010). York, UK: IEEE; 2010. pp. 676–680.

\bibitem{BibYilmazAlouiniEGK2010} Yilmaz F,  Alouini M-S. Extended generalized-K (EGK): A new simple and general model for composite fading channels. ArXiv Preprint 2010; arXiv:1012.2598, available at http://arxiv.org/abs/1012.2598.  

\bibitem{BibDiRenzoGraziosiSantucci2010} Di Renzo M, Graziosi F, Santucci F. Channel capacity over generalized fading channels: A novel MGF-based approach for performance analysis and design of wireless communication systems. IEEE T Veh Tech 2010; 59: 127–149.

\bibitem{BibSagiasWPC2011} Sagias NC, Lazarakis FI, Alexandridis AA, Dangakis KP, Tombras GS. Higher order capacity statistics of diversity receivers. KLUW Commun 2011; 56: 649–668.

\bibitem{BibYilmazAlouiniWCLetter2012} Yilmaz F, Alouini M-S. On the computation of the higher-order statistics of the channel capacity over generalized fading channels. IEEE Wirel Commun Le 2011; 1: 573-576.

\bibitem{BibYilmazTabassumAlouiniMELECON2012} Yilmaz F, Tabassum H, Alouini M-S. On the higher order capacity statistics of the multihop transmission systems. In: IEEE Mediterranean Electrotechnical Conference (MELECON 2012). Yasmine Hammamet, Tunisia: IEEE; 2012. pp. 1-4.

\bibitem{BibYilmazTabassumAlouiniTVT2014} Yilmaz F, Tabassum H, Alouini M-S. On the computation of the higher order statistics of the channel capacity for amplify-and-forward multihop transmission. IEEE T Veh Tech 2014; 63: 489–494.

\bibitem{BibPeppasMathiopoulosZhangSasase2018} Peppas K P, Mathiopoulos  P T,  Yang J, Zhang C, Sasase I. High-order statistics for the channel capacity of EGC receivers over generalized fading channels. IEEE Commun Lett 2018; 90: 1-1.

\bibitem{BibTsiftsisFoukalasKaragiannidisKhattabTVT2016} Tsiftsis TA, Foukalas F, Karagiannidis GK, Khattab T. On the higher order statistics of the channel capacity in dispersed spectrum cognitive radio systems over generalized fading channels. IEEE T Veh Tech 2016; 65: 3818–3823.

\bibitem{BibZhangChenPeppasLiLiuTCOM2017} Zhang J, Chen X, Peppas KP, Li X, Liu Y. On high-order capacity statistics of spectrum aggregation systems over $\eta$-$\mu$ and $\kappa$-$\mu$ shadowed fading channels. IEEE T Commun 2017; 65: 935–944.

\bibitem{BibYilmazSPAWC2012} Yilmaz F,  Alouini M-S. Novel asymptotic results on the high-order statistics of the channel capacity over generalized fading channels. In: IEEE International Workshop on Signal Processing Advances in Wireless Communications (SPAWC 2012); \c{C}e\c{s}me, Izmir, Turkey: IEEE; 2012. pp. 389–393.

\bibitem{BibCharash1979} Charash U, Reception through Nakagami fading multipath channels with random delays. IEEE T Commun 1979; 27: 657–670. 

\bibitem{BibYilmazAlouiniIWCMC2009} Yilmaz F,  Alouini M-S. Sum of Weibull variates and performance of diversity systems. In: International Conference on Wireless Communications and Mobile Computing: Connecting the World Wirelessly (IWCMC 2009); Leipzig, Germany: IEEE; 2009. pp. 247–252.

\bibitem{BibZhengTse2003} Zheng L, Tse DNC. Diversity and multiplexing: A fundamental tradeoff in multiple-antenna channels. IEEE T Inform Theory 2003; 49: 1073–1096. 

\bibitem{BibVerduTIT2002} {Verd{\'u}} S. Spectral efficiency in the wideband regime. IEEE T Inform Theory 2002; 48: 1319–1343.

\bibitem{BibZhengMedardTIT2007} Zheng L, David N, M{\'e}dard M. Channel coherence in the low-SNR regime. IEEE T Inform Theory 2007; 53: 976–997.

\bibitem{BibLiJinMcKayMatthewGaoWongTCOM2010} Li X, Jin S, McKay MR, Gao X, Wong K-K. Capacity of MIMO-MAC with transmit channel knowledge in the low SNR regime. IEEE T Wirel Commun 2010; 9: 926-931.

\bibitem{BibLozanoTulinoVerduTIT2003} Lozano A, Tulino AM, {Verd{\'u}} S. Multiple-antenna capacity in the low-power regime. IEEE T Inform Theory 2003; 49: 2527–2544.

\bibitem{BibZwillingerBook} Zwillinger D. CRC Standard Mathematical Tables and Formulae. 31st ed. Boca Raton, FL, USA: Chapman \& Hall/CRC; 2003. 

\bibitem{BibAlouiniBook} Simon MK, Alouini M-S. Digital Communication over Fading Channels. 2nd ed. New York, USA: John Wiley \& Sons, Inc.; 2005.

\bibitem{BibGradshteynRyzhikBook} Gradshteyn IS, Ryzhik IM. Table of Integrals, Series and Products, 5th ed. San Diego, CA: Academic Press; 1994.

\bibitem{BibAbramowitzStegunBook} Abramowitz M, Stegun IA. Handbook of Mathematical Functions with Formulas, Graphs and Mathematical Tables. 9th ed. New York, USA: Dover Publications; 1972.

\bibitem{BibGrunwald1867} Li C, Zeng F. Numerical methods for fractional calculus. New York, USA: Chapman and Hall/CRC; 2015.

\bibitem{BibLetnikov1868} Miller KS. Derivatives of noninteger order. Mathematics Magazine 1995; 68: 183-192.

\bibitem{BibKilbasSrivastavaTrujillo2006} Kilbas AA, Srivastava HM, Trujillo JJ. Theory and Applications of Fractional Differential Equations. New York, USA: North-Holland Mathematics Studies/Elsevier Science; 2006.

\bibitem{BibWolfram2010Book} Wolfram Research, Mathematica Edition. Version 8.0. Champaign, Illinois, USA: Wolfram Research, Inc.; 2010.

\bibitem{BibPrudnikovBookVol3} Prudnikov AP, Brychkov YA, Marichev OI. Integral and Series: Volume 3, More Special Functions. Amsterdam, The Netherlands: CRC Press Inc.; 1990.

\bibitem{BibYilmazAlouiniGLOBECOM2009} Yilmaz F,  Alouini M-S. Product of the powers of generalized Nakagami-m variates and performance of cascaded fading channels. In: IEEE Global Telecommunications Conference (GLOBECOM 2009); Honolulu, Hawaii, USA: IEEE; 2009. pp. 1-4.

\bibitem{BibYilmazAlouiniGLOBECOM2011} Yilmaz F,  Alouini M-S. On the average capacity and bit error probability of wireless communication systems. In: IEEE Global Telecommunications Conference (GLOBECOM 2011); Houston, Texas, USA: IEEE; 2011. pp. 1–6. 

\bibitem{BibChaudhryZubairBook} Chaudhry MA, Zubair SM. On a Class of Incomplete Gamma Functions with Applications. Boca Raton-London-New York-Washington D.C., USA: Chapman \& Hall/CRC; 2002.

\bibitem{BibYacoub2007} Yacoub MD. The $\kappa$-$\mu$ distribution and the $\eta$-$\mu$ distribution. IEEE Antenn Propag M 2007; 49: 68–81.

\bibitem{BibBurkeKendallBook} Burke EK, Kendall G. Search Methodologies: Introductory Tutorials in Optimization and Decision Support Techniques. 2nd ed. New York, USA: Springer Science+Business Media, Inc.; 2005.

\bibitem{BibMarlerArora2004} Marler RT, Arora JS. Survey of multi-objective optimization methods for engineering. Structural and Multidisciplinary Optimization 2004; 26: 369-395.

\bibitem{BibCaramiaDellOlma2008} Caramia M, DellOlmo P. Multi-objective Management in Freight Logistics: Increasing Capacity, Service Level and Safety with Optimization Algorithms. Santa Barbara, California, USA: Springer-Verlag London Limited; 2008.

\bibitem{BibNakagami1960} Nakagami M. The m-distribution - A general formula of intensity distribution of rapid fading. In: Statistical Methods in Radio Wave Propapation; Los Angeles CA, USA: Permagon Press; 1960. pp. 3–36.

\bibitem{BibYilmazAlouiniICT2010} Yilmaz F,  Alouini M-S. Outage capacity of multicarrier systems. In: IEEE International Conference on Telecommunications (ICT 2010); Doha, Qatar: IEEE; 2010. pp. 260–265.

\end{reference} 
\end{document}